\newtheorem{theorem}{Theorem}
\newtheorem{lemma}[theorem]{Lemma}
\newtheorem{proposition}[theorem]{Proposition}
\newtheorem{corollary}[theorem]{Corollary}
\newtheorem{example}{Example}
\newtheorem{remark}{Remark}
\newcommand{\beq}{\begin{equation}}
\newcommand{\eeq}{\end{equation}}
\newcommand{\bqa}{\begin{eqnarray}}
\newcommand{\eqa}{\end{eqnarray}}
\definecolor{green}{rgb}{0.00,0.50,0.00}
\newenvironment{proof}[1][Proof]{\noindent\textbf{#1.} }{\ \rule{0.5em}{0.5em}}
\begin{document}
\title{Classical and Quantum Stochastic Models of Resistive and Memristive Circuits}
\author{%
\IEEEauthorblockN{John E. Gough\IEEEauthorrefmark{1},
Guofeng Zhang\IEEEauthorrefmark{2},~\IEEEmembership{Member,~IEEE}} \\
\IEEEauthorblockA{\IEEEauthorrefmark{1}Department of Physics, Aberystwyth University, 
Ceredigion SY23 3BZ, Wales UK}\\
\IEEEauthorblockA{\IEEEauthorrefmark{2}Department of Applied Mathematics, 
The Hong Kong Polytechnic University, 
Hong Kong}\thanks{
Corresponding author: J.E. Gough (email: jug@aber.ac.uk).}}
\date{\today}
\maketitle


\IEEEtitleabstractindextext{
\begin{abstract}
The purpose of this paper is to examine stochastic Markovian models for circuits in phase space for which the drift term
is equivalent to the standard circuit equations. In particular we include dissipative components corresponding to
both a resistor and a memristor in series. We obtain a dilation of the problem for which is canonical in the sense
that the underlying Poisson Brackets structure is preserved under the stochastic flow. We do this first of all for
standard Wiener noise, but also treat the problem using a new concept of symplectic noise where the Poisson structure
is extended to the noise as well as the circuit variables, and in particular where we have canonically conjugate noises.
Finally we construct a dilation which describes the quantum mechanical analogue. 
\end{abstract}

\begin{IEEEkeywords}
Passive circuits, memristors, dissipation, quantization, quantum electronics
\end{IEEEkeywords}}

\IEEEdisplaynontitleabstractindextext
\IEEEpeerreviewmaketitle

\section{Introduction}\label{sec:introduction}

 \IEEEPARstart{D}{issipation} has long been realized
as a feature, though also a resource, in the design of engineered systems, 
\cite{Wil72}-\cite{DS13}. The question of how exactly to model dissipation, given that the fundamental
physical dynamical equations of motion are Hamiltonian, remains of
fundamental importance; the issue applies to both classical and quantum
systems, \cite{Z73}-\cite{SDD2011}. Hamiltonian
systems with a finite number of degrees of freedom have dynamical evolutions
that preserve the canonical structure - that is, the Poisson brackets in
classical theory, and the commutation relations in quantum theory. In this
setting, dissipation is \textit{per se} impossible. There exist a number of
ingenious approaches to tackling this problem such as approximations using
lossless systems \cite{SDD2011} and redefining the Poisson bracket to
include dissipative effects \cite{vanderSchaft}. 

The approach adopted in this paper in order to introduce dissipation 
is to embed the system in an environment (leading to a joint coupled Hamiltonian model) and average out
the environment. In other words, we consider stochastic models which are Hamiltonian in structure
which dilate the dissipative dynamics. It is well-known that under various assumptions about the bulk limit of the
environment, negligible autocorrelation of the environment processes
(memoryless property), ignoring rapid oscillations (the rotating wave
approximation in the quantum case), etc., one may obtain limiting dynamical
models with an irreversible semi-group evolution. Moreover, this semi-group
may often be dilated to a stochastic Markov system: in the classical case,
this may be described by stochastic differential equations of motion, for
instance for a diffusion process, where the generator is the second-order
differential operator on phase space coinciding with the generator of the
semi-group; in the quantum setting, this may be a unitary quantum stochastic
process, leading to an evolution described by quantum stochastic calculus 
\cite{HP84}, where the generator is Lindbladian \cite{Lindblad,GKS}.

The purpose of this paper is to carry out this programme in the setting of
electronic circuit models, where we allow for dissipation beyond the usual
ohmic damping. It is well known that constant inductance-capacitance ($LC$)
circuits are Hamiltonian, and are readily quantized, \cite[Section 3.4.3]%
{GarZol00}. Markovian models that include dissipation may then be formulated
on purely phenomenological grounds, however, it is clear that it is
desirable to have a theory that is capable of dealing with non-linear
dissipation, in particular, we wish to include charge and current dependent
resistances, which brings us to the formalism introduced by Chua \cite%
{Chua71} for a unified axiomatic description of passive circuit
components. While originally a theoretical construct \cite{Chua71}-\cite{Cohen}, there is evidence to
suggest that ideal memristors also occur in quantum models \cite{YLI14}.

The layout and main contributions of this paper are summarized as follows. In
section \ref{sec:General Circuits} we recall the framework of Chua for
describing general classical circuit models and present the canonical formulation
in phase space. In section \ref{sec:Stochastic_Models} we show how we may obtain a stochastic dilation of
these models for an arbitrary resistor and memristor in series, and remarkably, we are able to give explicit constructions
based on the principle that the dilation should in some sense be Hamiltonian: more exactly,
this emerges from the requirement that the stochastic evolution be canonical.
In the theory of quantum stochastic evolutions \cite{HP84}, this turns out to arise automatically from the requirement of a
unitarity of the evolution process. The classical theory is more flexible,
but less transparent as one has to put in the requirement of preservation of the Poisson brackets as an additional constraint
on the stochastic evolution explicitly. Here the algebraic features encoded in the
quantum case are replaced by geometric features imposed on the diffusion
process.
Our first construction (Theorem \ref{Thm:main}) deals with the classical problem and utilizes a result by one of the authors in \cite{Gough99}.
This however involves standard Wiener noise which has no symplectic structure of its own leading to a rather involved form of the dilation.
Motivated by the situtation that occurs in quantum models, we consider pairs of Wiener noise that satisfy their own non-trivial Poisson
bracket relations: these symplectic noises \cite{Gough_JSP_2015} were only recently introduce and they parallel the corresponding situation in quantum theory where the
noise is actually a quantum electromagnetic field. We give the corresponding
result, Theorem \ref{Thm:main2}, for symplectic noise and this constitutes a genuine extension of Theorem \ref{Thm:main} as we obtain
the later in the case where only one of each of the canonical pairs of
symplectic noise is present. The construction in Theorem \ref{Thm:main2} to obtain the general canonical dilation of a resisor and memristor in series
is markedly more simple and natural than that of Theorem \ref{Thm:main} due to the role of the canonically conjugate noises.
For completeness, in section \ref{sec:Quantum} we show how these
models are readily quantized and establish the analogous result, Theorem \ref%
{Thm:main3}. In Section \ref{sec:appro} we discuss the origin of the stochastic models proposed in Sections 
\ref{sec:Stochastic_Models} and \ref{sec:Quantum} . Finally we conclude the paper in Section \ref{sec:conclusion}.
Finally, there has been much interest into how Hamiltonian systems can approximate dissipatve systems, see \cite{Spohn} for classical 
and \cite{AFL},\cite{AGL} for quantum, and there have been a control theoretic models of lossless approximations
to dissipative systems \cite{SDD2007}, \cite{SDD2011}. In section \ref{sec:appro}, we show how these models may arise from 
approximations by physically realistic microscopic systems.

\section{General Circuit Theory}

\label{sec:General Circuits} The concept of a memristor was introduced by
Leon Chua in 1971 as a missing fourth element in the theory of idealized
passive components in electronics \cite{Chua71,ChuaKang76}. There are four
variables to consider in a circuit: the charge $q$, the current $I$, the
flux $\varphi $, and the voltage $V$. They are actually paired naturally as $%
(q,I)$ and $(\varphi , V)$, and we always have the fundamental relations 
\begin{equation}
I=\dot{q},\quad V=\dot{\varphi},  \label{eq:dots}
\end{equation}
however, it is mathematically useful to treat these as four independent
variables.

An ideal \textit{resistor} $\mathscr{R}$ is a component that leads to a
fixed $I$-$V$ characteristic, that is, the voltage $V$ across the resistor
and the current $I$ through the resistor are constrained by an equation of
the form 
\begin{equation*}
f_{\mathscr{R}}\left( I,V\right) =0,
\end{equation*}
and assuming that the relation is differentiable and bijective (that is,
each voltage determines a unique current and vice versa) we may always write 
\begin{equation*}
dV=R\left( I\right) \,dI.
\end{equation*}
Note that we can always choose $R$ to be a function of $I$ only by
assumption. The coefficient $R$ is the resistance, and in the case where it
is a constant $R$ we obtain Ohm's law $V=RI$.

Likewise, an ideal \textit{capacitor} $\mathscr{C}$ is a component that
determines a $q$-$V$ characteristic $f_{\mathscr{C}}\left( q,V\right) =0$
and under similar assumptions we may write 
\begin{equation*}
dV=\frac{1}{C\left( q\right) }\,dq;
\end{equation*}
while an \textit{inductor} $\mathscr{L}$ is a component that determines a $%
\varphi$-$I$ characteristic $f_{\mathscr{L}}\left( \varphi ,I\right) $ and
under similar assumptions we have 
\begin{equation*}
d\varphi =L\left( I\right) \,dI.
\end{equation*}
The physical quantities $C$ and $L$ arising are the capacitance and
inductance respectively, and it is convenient to take them as functions of $%
q $ and $I$ respectively.

Chua's insight was that the mathematical theory missed out an idealized
component $\mathscr{M}$ that determined a $q$-$\varphi $ characteristic.
This fourth element he called a \textit{memristor} and it fixed a
relationship $f_{\mathscr{M}} \left( q,\varphi \right) =0$, and again
assuming differentiability and bijective we are led to the differential
relation 
\begin{equation*}
d\varphi =M\left( q\right) \,dq.
\end{equation*}
In fact, it immediately follows from (\ref{eq:dots}) that for a memristor we
have the relation 
\begin{equation*}
V=M\left( q\right) \,I,
\end{equation*}
and comparison with Ohm's law suggests that the physical quantity $M\left(
q\right) $ is a charge dependent resistance. Indeed as the memristor relates
the integral of the voltage $\varphi \left( t\right) =\int_{-\infty
}^{t}V\left( t^{\prime }\right) dt^{\prime }$ to the integral of the current 
$q\left( t\right) =\int_{-\infty }^{t}I\left( t^{\prime }\right) dt^{\prime
} $, the component acts as a resistive element with memory of past voltages
and currents - whence the name memristor.

The capacitor and inductor are both capable of storing energy and we have
the associated energies 
\begin{eqnarray*}
E_{\mathscr{C}} &=&\int qdV=\int \frac{qdq}{C\left( q\right) }, \\
E_{\mathscr{L}} &=&\int Id\varphi =\int L\left( I\right) I dI.
\end{eqnarray*}

We may summarize the main properties of the four components as follows:

\bigskip

\textbf{The Inductor} $\mathscr{L}$

\bigskip

\begin{tabular}{l||l}
Type & Energy Storing \\ 
Characteristic & $d\varphi =L(I)dI$ \\ 
Voltage & $V_{\mathscr{L}}=\dot{\varphi}_{\mathscr{L}}=L(I_{\mathscr{L}})%
\dot{I}_{\mathscr{L}}$%
\end{tabular}

\bigskip

\textbf{The Capacitor} $\mathscr{C}$

\bigskip

\begin{tabular}{l||l}
Type & Energy Storing \\ 
Characteristic & $dV=\frac{1}{C\left( q\right) }dq$ \\ 
Voltage & $V_{\mathscr{C}}=\left. \int \frac{dq}{C\left( q\right) }\right|
_{q_{\mathscr{L}}}$%
\end{tabular}

\bigskip

\textbf{The Resistor} $\mathscr{R}$

\bigskip

\begin{tabular}{l||l}
Type & Dissipative \\ 
Characteristic & $dV=R(I)dI$ \\ 
Voltage & $V_{\mathscr{R}}=\left. \int R(I)dI\right| _{I_{\mathscr{R}}}$%
\end{tabular}

\bigskip

\textbf{The Memristor} $\mathscr{M}$

\bigskip

\begin{tabular}{l||l}
Type & Dissipative \\ 
Characteristic & $d\varphi =M(q)dq$ \\ 
Voltage & $V_{\mathscr{M}}=\dot{\varphi}_{\mathscr{M}}=M(q_{\mathscr{M}}) I_{%
\mathscr{M}}$%
\end{tabular}

\bigskip

These four components will then form the building blocks for general passive
electric circuits. While they are clearly idealizations, it is nevertheless
the case that they are the fundamental elements that can be combined to
produce physical systems.

\begin{figure}[tbph]
\centering
\includegraphics[width=0.40\textwidth]{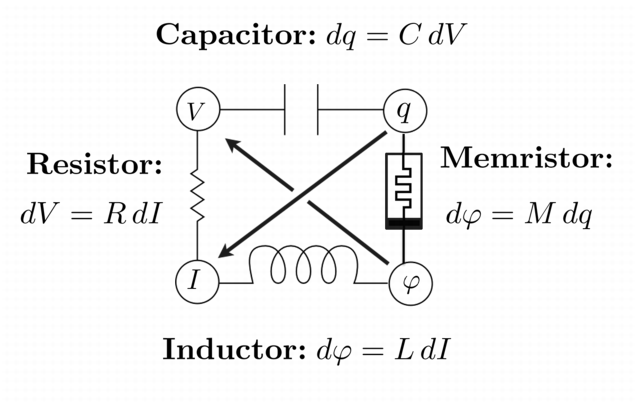}
\caption{Chua's fourfold way! The inclusion of the memristor conceptually
completes the set of idealized passive circuit elements.}
\label{fig:4_elements}
\end{figure}

The introduction of the memristor by Chua has the important theoretical
implication that we can model a wider class of dissipative systems than just
described by the resistor alone. We will explore this shortly, but the main
feature is that we may have dissipative models where the damping is
dependent on the charge in addition to the current.

\subsection{Lagrangian \& Hamiltonian models for energy storing components}

The situation where we have an inductor and capacitor in series is
well-known to be a conservative system with a Lagrangian formulation, see
also \cite{Jelstema,DiVentra,Cohen}. The equation of motion follows from $V_{%
\mathscr{L}}+V_{\mathscr{C}}=e\left( t\right)$ and we are 
\begin{equation*}
L(\dot{q}) \, \ddot{q} + \Phi_{\mathscr{C}}^\prime (q) = e(t),
\end{equation*}
with 
\begin{equation*}
\Phi_{\mathscr{C}}^\prime (q) = \int \frac{dq}{C\left( q\right) } .
\end{equation*}
We assume the existence of a twice-differentiable function $K$ such that $%
K^\prime (I) >0$ and $K^{\prime \prime} (I) = L(I)$. The equations of the $%
\mathscr{LC}$-circuit then follow from the Euler-Lagrange equations $\frac{d%
}{dt} ( \frac{\partial \mathcal{L}}{\partial \dot{q}} ) -\frac{\partial 
\mathcal{L}}{\partial q}=0$ with Lagrangian 
\begin{equation*}
\mathcal{L}(q,\dot{q}, t) = K(\dot{q}) - \Phi_{\mathscr{C}} (q) +e(t) q .
\end{equation*}
The canonical momentum is then given by 
\begin{equation*}
p= \frac{\partial \mathcal{L} }{\partial \dot{q}} = K^\prime ( \dot{q} ) ,
\end{equation*}
and by assumption on $K$ this is a bijection with inverse $\dot{q} = \mathbb{%
I}(p)$, that is, the Lagrangian is hyper-regular. Note that 
\begin{equation*}
\dot{p} = L( \dot{q} ) \, \ddot{q} = L(I) \, \dot{I} \equiv V_{\mathscr{L}}= 
\dot{\varphi}_{\mathscr{L}},
\end{equation*}
so in a sense $p$ may be identified with the inductor flux $\varphi_{%
\mathscr{L}}$. The corresponding Hamiltonian is then 
\begin{eqnarray}
H(q,p,t) = \mathbb{K} (p) + \Phi_{\mathscr{C}} (q) -e(t) q,  \label{eq:Ham}
\end{eqnarray}
and we have the Legendre transform 
\begin{equation*}
\mathbb{K}(p) = \sup_{I} \{ pI -K(I) \} \equiv p \mathbb{I} (p) - K (\mathbb{%
I} (p)).
\end{equation*}

In the special case where the inductance is $L(I)=L_0$ (constant) we take $%
K(I) = \frac{1}{2} L_0 I^2$, and here 
\begin{equation*}
\mathbb{I} (p) = p/L_0, \quad \mathbb{K} (p ) =\frac{1}{2L_0}p^2.
\end{equation*}
Likewise, in the special case where the capacitance is $C(q) =C_0$, we may
take $\Phi_{\mathscr{C}} (q)= \frac{1}{2C_0}q^2$. Here, the Hamiltonian
takes the explicit form 
\begin{equation}
H_0 (q,p,t) = \frac{p^2}{2L_0} + \frac{q^2}{2C_0} - q\, e(t),  \label{eq:H0}
\end{equation}
with $p=L_0 \dot q$, and that this is a driven harmonic oscillator with
resonant frequency $\omega_0 = (L_0 C_0)^{-1/2 }$. Note that, with $e(t)=0$, 
$H_0$ is equal to the physically stored energy $E_{\mathscr{L}} +E_{%
\mathscr{C}}$ with the appropriate substitution $\dot{q} = p/L_0$.

\subsection{Dissipative Circuits}

We begin by examining a circuit having all four components in series, see
Figure \ref{fig:LCRM_circuit}.

\begin{figure}[htbp]
\centering
\includegraphics[width=0.400\textwidth]{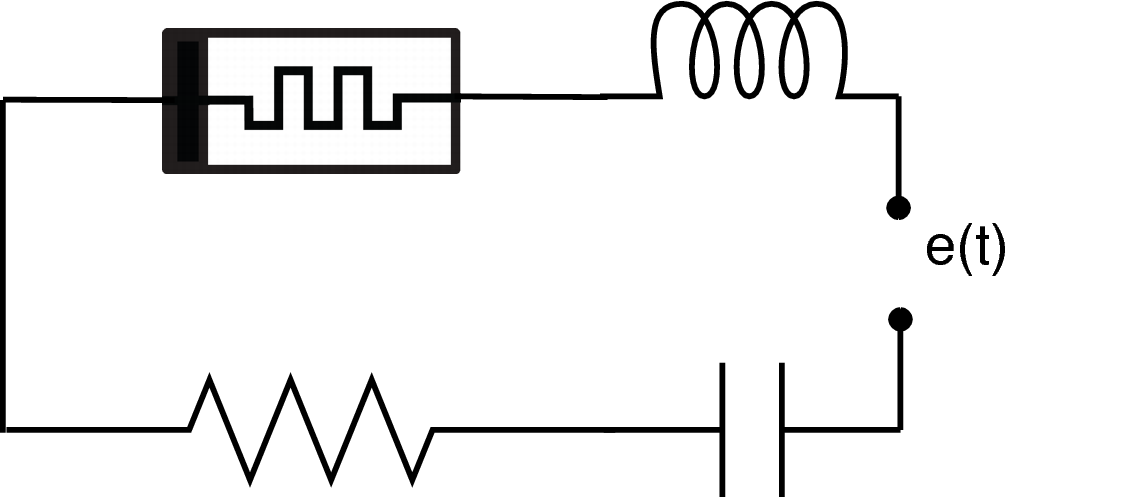}
\caption{All four idealized components in series.}
\label{fig:LCRM_circuit}
\end{figure}

The applied emf $e(t)$ driving the circuit is related by Kirchhoff's voltage
law to the total voltage drop over the circuit: 
\begin{equation*}
V_{\mathscr{L}}+V_{\mathscr{C}}+V_{\mathscr{R}}+V_{\mathscr{M}}=e\left(
t\right).
\end{equation*}

Recalling that $V_{\mathscr{L}}=\dot{p}$, the dynamical equation may instead
be expressed as 
\begin{equation*}
\dot{p}=-\int \frac{dq}{C\left( q\right) }-\int R(I)dI- M\left( q\right) 
\dot{q} +e\left( t\right) .
\end{equation*}
Let us introduce the functions 
\begin{eqnarray}
\mathcal{D} _{\mathscr{R}}^{\prime }(I) &=& \int R(I) dI ,  \notag \\
\Psi _{\mathscr{R}}^{\prime }(p) &=& \mathcal{D} _{\mathscr{R}}^{\prime } ( 
\mathbb{I} (p)),  \label{eq:dissip_R}
\end{eqnarray}
then we obtain the system of equations in the $\left( q,p\right) $ phase
space of the circuit 
\begin{eqnarray}
\dot{q} &=& \mathbb{I} (p),  \notag \\
\dot{p} &=&-\Phi _{\mathscr{C}}^{\prime }\left( q\right) -\Psi _{\mathscr{R}%
}^{\prime }(p)- M\left( q\right)\mathbb{I}(p) +e\left( t\right) .
\label{eq:base}
\end{eqnarray}
The negative divergence of the velocity field of phase points is then 
\begin{eqnarray}
\gamma \left( q,p\right) &=&-\left( \frac{\partial }{\partial q}\dot{q}+%
\frac{\partial }{\partial p}\dot{p}\right)  \nonumber \\
&=&\Psi _{\mathscr{R}}^{\prime \prime }(p)+M\left( q\right)\frac{\partial 
\mathbb{I}}{\partial p}  \nonumber\\
&=&\frac{1}{\mathbb{L} (p) } \left( \mathbb{R} (p) + M\left( q\right)
\right) , \label{eq:_secII_gamma}
\end{eqnarray}
where we define the $p$-dependent resistance and inductance as 
\begin{equation*}
\mathbb{R} (p)= R (\mathbb{I} (p)) , \quad \mathbb{L} (p)= L (\mathbb{I}
(p)).
\end{equation*}

The equations of motion are 
\begin{eqnarray}
\dot{q} &=& \frac{\partial H}{\partial p } , \\
\dot{p} &=& -\frac{\partial H}{\partial q } - \mathbb{V}_{\mathscr{R}} (p) - 
\mathbb{V}_{\mathscr{M}} (q,p) ,
\end{eqnarray}
with $H$ the Hamiltonian (\ref{eq:Ham}) for the energy storing components $%
\mathscr{L}$ and $\mathscr{C}$ and the voltages associated with the energy
dissipating components $\mathscr{R}$ and $\mathscr{M}$ are 
\begin{eqnarray}
\mathbb{V}_{\mathscr{R}} (p) &=&\left. \int R(I)dI\right| _{I= \mathbb{I}
(p)}, \; \text{i.e.} \, \Psi _{\mathscr{R}}^{\prime }(p), \\
\mathbb{V}_{\mathscr{M}} (q,p) &=& M\left( q\right)\mathbb{I}(p) .
\end{eqnarray}

In the situation where the inductance is a constant $L_0$ the system of
equations (\ref{eq:base}) reduce to $\dot{q} = v^q, \, \dot{p} = v^p$ with 
\begin{eqnarray}
v^{q} &=& \frac{p}{L_0},  \notag \\
v^{p} &=&-\Phi _{\mathscr{C}}^{\prime }\left( q\right) -\Psi _{\mathscr{R}%
}^{\prime }(p)-\frac{1}{L_0}M\left( q\right)p+e\left( t\right) .
\label{eq:drift_velocity}
\end{eqnarray}

\begin{remark}
we shall generally be interested in the behaviour of the Poisson brackets 
\begin{eqnarray}
\{ f,g \}= \frac{\partial f }{\partial q} \frac{\partial g}{\partial p}- 
\frac{\partial g }{\partial q} \frac{\partial f}{\partial p}
\end{eqnarray}
under general dynamical flows on phase space.

In general, let $w^q$ and $w^p$ be twice-differentiable coefficients of a
vector field in phase space along the coordinate axes, then the tangent
vector field is the operator 
\begin{equation*}
w = w^q \frac{\partial }{\partial q}+ w^p \frac{\partial }{\partial p}.
\end{equation*}
We say that the field is Hamiltonian if it can be written in terms of the
Poisson brackets as $w( \cdot ) \equiv \{ \cdot , H\}$, for some smooth
function $H$. In coordinate form this reads as $w^q =\frac{\partial H}{%
\partial p}, \, w^p =- \frac{\partial H}{\partial q}$.

Given a general vector field $w$, the flow it generates is the family of
diffeomorphisms $\{ \phi_t \}$, for $t$ in a neighbourhood of 0, such that $%
t\mapsto (q_t,p_t)\equiv \phi_t (q_0,p_0)$ gives the integral curve of $w$
passing through $(q_0,p_0)$ at $t=0$. The flow is canonical if it preserves
the Poisson brackets. To obtain a necessary condition, we look at small
times $t$ where locally we have 
\begin{eqnarray*}
q_{t} &=&q_0+t v^{q}\left( q_0,p_0\right) +O\left( t^{2}\right) \\
p_{t} &=&p_0+t v^{p}\left( q_0,p_0\right) +O\left( t^{2}\right)
\end{eqnarray*}
and so 
\begin{eqnarray*}
\frac{\partial q_{t}}{\partial q} &=&1+tv_{\,,q}^{q}+O\left( t^{2}\right) ,%
\frac{\partial q_{t}}{\partial p}=tv_{\,,p}^{q}+O\left( t^{2}\right) \\
\frac{\partial p_{t}}{\partial q} &=&tv_{\,,q}^{p}+O\left( t^{2}\right) ,%
\frac{\partial p_{t}}{\partial p}=1+tv_{\,,p}^{p}+O\left( t^{2}\right)
\end{eqnarray*}
We therefore see that 
\begin{eqnarray*}
\left\{ q_{t},p_{t}\right\} &=&1+t\left( v_{\,,q}^{q}+v_{\,,p}^{p}\right)
+O\left( t^{2}\right) \\
&=&1 - \gamma (q_0,p_0) t +O\left( t^{2}\right).
\end{eqnarray*}
A necessary condition for the flow to be canonical is therefore that $\gamma
(q,p) \equiv 0$, and, as is well-known, the possible solutions take the form 
$v^{q}=\frac{\partial H}{\partial p},v^{p}=-\frac{\partial H}{\partial q}$
for some function $H$, in other words $w$ must be a Hamiltonian vector
field. More generally, the infinitesimal expression describing the
preservation of the Poisson brackets under the flow integral to $w$ is 
\begin{equation}
w \left( \{ f,g \} \right) = \{ w(f) , g \} + \{ f, w(g) \}  \label{eq:can}
\end{equation}
for every pair of smooth functions $f,g$ on phase space. Again it can be
shown that this holds if and only if $w$ is divergence free, that is, $\frac{%
\partial }{\partial q}w^q+ \frac{\partial }{\partial p}w^p =0$, and this of
course is equivalent to $w$ being a Hamiltonian vector field.

In general, the dissipation $\gamma (q,p)$ gives the exponential rate at
which phase area in the $qp$ phase space is decreasing. As physically $R\geq
0$ and $M\geq 0$ for passive systems, we must have $\gamma \geq 0$.
Geometrically, $\gamma$ characterizes exactly the non-Hamiltonian nature of
the dynamical flow on phase space, and we conclude that a dynamical flow on
phase space will preserve the Poisson brackets if and only if it corresponds
to the evolution governed by some Hamiltonian function $H$.
\end{remark}

\bigskip

\begin{remark}
For the systems in series as above, we may decompose 
\begin{equation*}
\gamma (q,p) = \gamma_{\mathscr{R}} (p) + \gamma_{\mathscr{M}} (q,p)
\end{equation*}
with $\gamma_{\mathscr{R}} (p) = \frac{\partial }{\partial p}\mathbb{V}_{%
\mathscr{R}}$ and $\gamma_{\mathscr{M}} (q,p) = \frac{\partial }{\partial p}%
\mathbb{V}_{\mathscr{M}}$. In the case where the inductances are constant we
find 
\begin{equation*}
\gamma_{\mathscr{R}} (p) =\frac{1}{L_0} \mathbb{R} (p), \quad \gamma_{%
\mathscr{M}} (q,p) =\frac{1}{L_0} M(q),
\end{equation*}
and in particular these are functions of $p$ only and, respectively, $q$
only. Therefore we have a natural decomposition (up to an additive constant)
of the resistor and memristor elements as providing the $p$ only and $q$
only contributions for the dissipation $\gamma$, respectively, when the
inductances are constant.
\end{remark}

\bigskip

\begin{remark}
It is fairly easy to construct dissipative components that do not have the
special decomposition $\gamma (q,p) = \gamma_{\mathscr{R}} (p) + \gamma_{%
\mathscr{M}} (q)$ for constant inductance. A simple example is a resistor
and memristor in parallel, where now 
\begin{equation*}
\mathbb{V} (q,p) = \left( \mathbb{V}_{\mathscr{R}}(p)^{-1} +\mathbb{V}_{%
\mathscr{M}}(q)^{-1} \right)^{-1},
\end{equation*}
see Figure \ref{fig:RM_parallel}. 
\begin{figure}[h]
\centering
\includegraphics[width=0.250\textwidth]{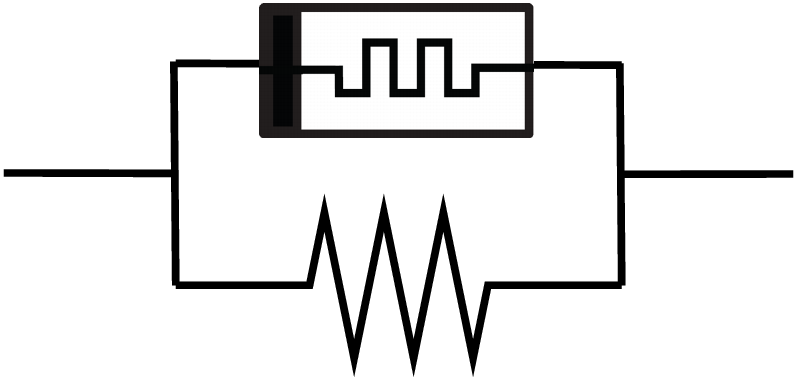}
\caption{A resistor and memristor in parallel.}
\label{fig:RM_parallel.png}
\end{figure}

Without going into explicit details, it is intuitively obvious that an
arbitrary damping function $\gamma (q,p) \geq 0$ may be approximated by a
network of resistive and memristive components in series and parallel.
\end{remark}

\bigskip

\begin{remark}
The dynamical equations may also be written in Lagrangian terms as 
\begin{equation*}
\frac{d}{dt} \left( \frac{\partial \mathcal{L}}{\partial \dot{q}} \right) -%
\frac{\partial \mathcal{L}}{\partial q}= -\frac{\partial \mathcal{D}}{%
\partial \dot{q}},
\end{equation*}
where we must now introduce the so-called dissipation potential $\mathcal{D}
(q,\dot{q}) = \mathcal{D}_{\mathscr{R}} ( \dot{q} ) + \frac{1}{2} M(q) \dot{q%
}^2$ which is a sum of the resistance potential $\mathcal{D}_{\mathscr{R}}$
introduced in (\ref{eq:dissip_R}) and a memristive Rayleigh type dissipation
function. However, we shall work in the Hamiltonian formalism.
\end{remark}

\subsection{Equivalent Circuit Theorem}

Based on the above discussions, we have the following separation of
arbitrary passive circuits into idealized energy storing and energy
dissipating components.

\begin{theorem}
An arbitrary passive circuit may be naturally decomposed in an energy
storing components, the \textbf{Hamiltonor}, described by a Hamilton's
function $H$ and an energy dissipating component, the \textbf{Dissipator},
described by a voltage function $\mathbb{V}_{\mathscr{D}}$, see Figure \ref%
{fig:HD_circuit}, such that the circuit equations are 
\begin{eqnarray}
\dot{q} &=&\frac{\partial H}{\partial p}, \\
\dot{p} &=&-\frac{\partial H}{\partial q}-\mathbb{V}_{\mathscr{D}}(q,p).
\label{eq:gen}
\end{eqnarray}
The dissipation is then the negative divergence of the phase velocity, and
this is given by 
\begin{equation*}
\gamma (q,p)=\frac{\partial }{\partial p}\mathbb{V}_{\mathscr{D}}(q,p)\geq 0.
\end{equation*}
\end{theorem}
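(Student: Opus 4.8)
The plan is to assemble the stated decomposition directly from Kirchhoff's voltage law together with the Lagrangian-with-dissipation formulation recorded in the last remark, and then to pass to phase-space coordinates by Legendre transform. First I would isolate the energy-storing elements $\mathscr{L}$ and $\mathscr{C}$: their constitutive relations integrate to the stored energies, so collecting their contributions produces the Lagrangian $\mathcal{L}(q,\dot q,t)=K(\dot q)-\Phi_{\mathscr{C}}(q)+e(t)q$ and, under the hyper-regularity assumption ($K'$ a bijection, with $K''=L>0$), the Hamiltonian $H$ of \eqref{eq:Ham}. The crucial kinematic observation is that the defining relation $I=\dot q$ forces $\dot q=\mathbb{I}(p)=\partial H/\partial p$ regardless of which dissipative elements are present, since a resistor or memristor imposes only a current--voltage or charge--flux constraint and never enters the charge--current relation. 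This already secures the first of the two equations \eqref{eq:gen}.

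Next I would collapse every dissipative voltage drop into a single function $\mathbb{V}_{\mathscr{D}}$. Kirchhoff's law equates the inductor voltage $V_{\mathscr{L}}=\dot p$ to $e(t)$ minus the remaining drops; the capacitor supplies $\Phi_{\mathscr{C}}'(q)=\partial H/\partial q+e(t)$, while the resistor and memristor contribute voltages depending on $I$ and $q$ only. Re-expressing the latter through $I=\mathbb{I}(p)$ and defining $\mathbb{V}_{\mathscr{D}}(q,p)$ as their total yields the second equation $\dot p=-\partial H/\partial q-\mathbb{V}_{\mathscr{D}}(q,p)$. Equivalently, this is the Legendre transform of the dissipative Euler--Lagrange equation $\frac{d}{dt}(\partial\mathcal{L}/\partial\dot q)-\partial\mathcal{L}/\partial q=-\partial\mathcal{D}/\partial\dot q$, with $\mathbb{V}_{\mathscr{D}}$ produced by rewriting $\partial\mathcal{D}/\partial\dot q$ in the momentum variable.

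The dissipation identity is then read off from the divergence computation already performed in \eqref{eq:_secII_gamma}. Since the Hamiltonian contributions to $\partial_q\dot q+\partial_p\dot p$ cancel by equality of mixed partials, the only surviving term in the negative divergence is $\partial_p\mathbb{V}_{\mathscr{D}}$, giving $\gamma(q,p)=\partial\mathbb{V}_{\mathscr{D}}/\partial p$. For the sign I would invoke passivity elementwise: $\partial_p\mathbb{V}_{\mathscr{R}}=\mathbb{R}(p)/\mathbb{L}(p)$ and $\partial_p\mathbb{V}_{\mathscr{M}}=M(q)\,\partial_p\mathbb{I}=M(q)/\mathbb{L}(p)$, both nonnegative because $R\ge 0$, $M\ge 0$ and $\mathbb{L}>0$, so that $\gamma\ge 0$.

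I expect the main obstacle to be the word \emph{arbitrary}, since the explicit derivation in the text is only for the series configuration: the substantive step is to argue that a general passive network still reduces to this two-block form. The cleanest route is to let $\mathbb{V}_{\mathscr{D}}(q,p)$ be defined as the total voltage drop across all dissipative elements at a given port current and charge, using the reduction of an arbitrary passive one-port to an effective constitutive relation (as sketched for the parallel case in the third remark), so that circuit topology is absorbed into the single function $\mathbb{V}_{\mathscr{D}}$. One must then verify that this effective dissipator inherits passivity, i.e.\ that the series/parallel combination rules preserve $\partial_p\mathbb{V}_{\mathscr{D}}\ge 0$; this monotonicity bookkeeping is where the only genuine work lies.
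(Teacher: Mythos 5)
Your proposal is correct and follows essentially the same route as the paper, which offers no standalone proof of this theorem but treats it as a summary of the preceding series-circuit derivation: Kirchhoff's law plus the Legendre transform gives the two phase-space equations, and the divergence computation of \erf{eq:_secII_gamma} identifies $\gamma = \partial_p \mathbb{V}_{\mathscr{D}} \geq 0$ from elementwise passivity. You are in fact slightly more careful than the paper on the word \emph{arbitrary}: the paper likewise only handles the series configuration explicitly and dismisses general network topologies as ``intuitively obvious'' via series/parallel reduction, so the monotonicity bookkeeping you flag is a genuine gap left open by both arguments.
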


\begin{figure}[h]
\centering
\includegraphics[width=0.480\textwidth]{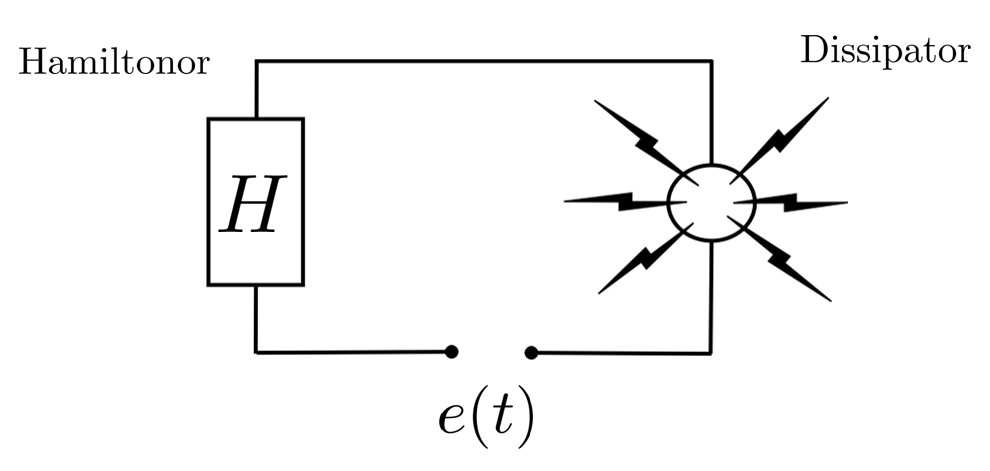}
\caption{A Hamiltonian component (Hamiltonor) and dissipative component
(dissipator) in series.}
\label{fig:HD_circuit.png}
\end{figure}

From the mathematical perspective, the beauty of Chua's introduction of the
memristor as a dissipative element is that it enlarges the class of possible
dissipators in a maximal way.

\section{Classical Stochastic Models}

\label{sec:Stochastic_Models}

\subsection{Diffusion Noise Models}

We now consider a stochastic dynamics on phase space given by the
Stratonovich stochastic differential equation (SDE)
\begin{equation}
dq = w^q \, dt + \sum_\alpha \sigma_\alpha^q \circ dB^\alpha_t, \quad dp =
w^p \, dt +\sum_\alpha \sigma_\alpha^p \circ dB^\alpha_t,  \label{eq:strat}
\end{equation}
where the coefficients $w^q,w^p,\sigma_\alpha^q,\sigma_\alpha^p$ are assumed
twice differentiable functions of the coordinates $(q,p)$ and satisfy
Lipschitz and growth conditions to ensure existence and uniqueness of
solution \cite{Oksendal}. Here we take independent standard Wiener processes 
$(B^\alpha_t)_{t\geq 0}$. The equation may be cast in It\={o} form as 
\begin{eqnarray*}
dq &=& \left( w^q + \frac{1}{2} \sum_\alpha \sigma_\alpha^q \frac{\partial
\sigma_\alpha^q}{\partial q} + \frac{1}{2} \sum_\alpha\sigma_\alpha^p \frac{%
\partial \sigma_\alpha^q}{\partial p} \right) \, dt \\
&&+ \sum_\alpha \sigma_\alpha^q \, dB^\alpha_t, \\
dp &=& \left( w^p + \frac{1}{2}\sum_\alpha \sigma_\alpha^q \frac{\partial
\sigma_\alpha^p}{\partial q} + \frac{1}{2}\sum_\alpha \sigma_\alpha^p \frac{%
\partial \sigma_\alpha^p}{\partial p} \right) \, dt \\
&&+ \sum_\alpha \sigma_\alpha^p \, dB^\alpha_t.
\end{eqnarray*}

We now require that a stochastic differential flow preserves the Poisson
structure in the sense that we now obtain a random family of canonical
diffeomorphisms on phase space. In the It\={o} calculus, this implies that 
\begin{equation*}
d \{ f_t,g_t \}_t = \{ d f_t,g_t \}_t +\{ f_t, dg_t \}_t + \{ df_t ,dg_t
\}_t .
\end{equation*}
We now quote the main result from \cite{Gough99}, section III.

\bigskip

\begin{theorem}
\label{thm:JMP1999} A stochastic dynamics determined by the system of
stochastic differential equations (\ref{eq:strat}) on phase space will be
canonical if and only if the vector fields $w$ and $\sigma_\alpha$ are
Hamiltonian.
\end{theorem}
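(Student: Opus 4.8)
The plan is to exploit the one feature that makes the Stratonovich form (\ref{eq:strat}) preferable to the It\^o form here, namely that it obeys the ordinary chain rule: for each fixed realisation of the noise the solution map $\phi_t:(q_0,p_0)\mapsto(q_t,p_t)$ is then a bona fide diffeomorphism of phase space, and every piece of deterministic symplectic bookkeeping transfers verbatim. Recall that the flow is canonical when $\{f\circ\phi_t,\,g\circ\phi_t\}=\{f,g\}\circ\phi_t$ for all smooth $f,g$; in two dimensions the Poisson bracket transforms under any map by the Jacobian determinant, so this is equivalent to the single scalar requirement $\det J_t=\{q_t,p_t\}\equiv 1$, where $J_t=\partial(q_t,p_t)/\partial(q_0,p_0)$ and the bracket is taken in the initial coordinates. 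The strategy is therefore to obtain a closed stochastic equation for $\det J_t$ and read off what forces it to remain $1$.

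First I would linearise (\ref{eq:strat}) in the initial data. Differentiating with respect to $(q_0,p_0)$ and using the Stratonovich chain rule yields the matrix equation
\begin{equation*}
dJ_t = (Dw)\,J_t\,dt + \sum_\alpha (D\sigma_\alpha)\,J_t \circ dB^\alpha_t,
\end{equation*}
with $Dw$ and $D\sigma_\alpha$ the $2\times2$ derivative matrices of the drift and diffusion fields evaluated along the flow. Because Stratonovich calculus respects ordinary differentiation, Jacobi's formula $d(\det J)=\det J\,\mathrm{tr}(J^{-1}\,dJ)$ holds with no It\^o correction, and cyclicity of the trace collapses this to the scalar equation
\begin{equation*}
d(\det J_t)=\det J_t\left(\mathrm{div}(w)\,dt+\sum_\alpha \mathrm{div}(\sigma_\alpha)\circ dB^\alpha_t\right),
\end{equation*}
since $\mathrm{tr}(Dw)=\partial_q w^q+\partial_p w^p=\mathrm{div}(w)$, and similarly for each $\sigma_\alpha$. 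With $\det J_0=1$ this integrates to
\begin{equation*}
\det J_t=\exp\left(\int_0^t \mathrm{div}(w)\,ds+\sum_\alpha\int_0^t \mathrm{div}(\sigma_\alpha)\circ dB^\alpha_s\right).
\end{equation*}

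Both directions now follow. If $w$ and every $\sigma_\alpha$ are Hamiltonian then, as recalled in the Remark above (a planar field is divergence free iff it is Hamiltonian), the exponent vanishes identically and $\det J_t\equiv1$, so the flow is canonical. Conversely, if the flow is canonical then $\det J_t\equiv1$ for almost every noise path, so the exponent is a continuous semimartingale that is identically zero; its quadratic variation $\sum_\alpha\int_0^t(\mathrm{div}\,\sigma_\alpha)^2\,ds$ must then vanish, forcing $\mathrm{div}(\sigma_\alpha)=0$ for each $\alpha$, after which the surviving bounded-variation part forces $\mathrm{div}(w)=0$. Equivalently, one may expand $\det J_t=1+t\,\mathrm{div}(w)+\sum_\alpha \mathrm{div}(\sigma_\alpha)\,\Delta B^\alpha_t+\cdots$ and use the independence of the increments $\Delta B^\alpha_t$ to isolate each coefficient, exactly paralleling the short-time computation already carried out in the Remark for the deterministic case. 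Invoking once more the equivalence of divergence free and Hamiltonian finishes the argument.

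The step I expect to be the main obstacle is the ``only if'' direction: one must argue that the single scalar identity $\det J_t\equiv1$, holding jointly across all the independent noise channels, pins down each divergence \emph{separately} rather than some mixture of them. This is exactly where the probabilistic structure---the independence and nonzero quadratic variation of the $B^\alpha$---does work that geometry alone cannot, and where one also upgrades the vanishing of the divergences along individual trajectories to their vanishing as functions on all of phase space (by demanding canonicity for every initial point). A secondary point worth flagging is the clean validity of Jacobi's formula in the Stratonovich setting, which is the structural reason the conditions decouple into one Hamiltonian requirement per vector field.
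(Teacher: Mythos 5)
The paper does not actually prove this statement: Theorem~\ref{thm:JMP1999} is quoted verbatim from \cite{Gough99}, so there is no in-text argument to compare against. Your proof is, however, correct and self-contained, and it is close in spirit to the cited source: there the canonicity condition is extracted by computing the It\^{o} differential of $\{f_t,g_t\}$ and matching martingale and drift coefficients, whereas you package the same information into the single scalar $\det J_t=\{q_t,p_t\}$ via the variational (linearized) Stratonovich equation and Jacobi's formula. Your route buys a particularly clean ``only if'' direction: the exponential formula makes the exponent a semimartingale whose martingale part has quadratic variation $\sum_\alpha\int_0^t(\mathrm{div}\,\sigma_\alpha)^2\,ds$, so the independent channels decouple automatically as a sum of squares, and evaluating at $s=0$ over all initial points upgrades the pathwise vanishing to vanishing on all of phase space --- precisely the two obstacles you flagged, and both are handled correctly (note the order matters: you must kill the martingale part first, since the Stratonovich-to-It\^{o} correction would otherwise contaminate the bounded-variation part from which you read off $\mathrm{div}\,w=0$, and your argument does respect this order). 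Two points worth making explicit: (i) the existence of the stochastic flow of diffeomorphisms and the validity of the linearized SDE for $J_t$ require slightly more than bare Lipschitz-plus-growth regularity (e.g.\ $C^2$ coefficients with bounded derivatives in Kunita's theory), which is consistent with the paper's standing assumptions but deserves a citation; (ii) the final equivalence ``divergence-free $\Leftrightarrow$ Hamiltonian'' uses that phase space is simply connected, which is harmless here since it is $\mathbb{R}^2$ and is the same equivalence the paper's Remark invokes in the deterministic case.
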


We now suppose that this is the case and take $w^q =\frac{\partial H}{%
\partial p}, \, w^p =- \frac{\partial H}{\partial q}$ and $\sigma_\alpha^q =%
\frac{\partial F_\alpha}{\partial p}, \, \sigma_\alpha^p =- \frac{\partial
F_\alpha}{\partial q}$. In the It\={o} calculus we find the following
Langevin equation for functions $f_t =f(q_t,p_t)$ 
\begin{eqnarray*}
d f_t & =& \left( \{ f, H \}_t + \frac{1}{2}\sum_\alpha \{ \{f, F_{\alpha }
\}_t , F_{\alpha } \}_t \right) \, dt \\
&& + \sum_\alpha \{ f, F_{\alpha } \}_t \, dB^\alpha_t .
\end{eqnarray*}
The dissipative component of the evolution is described by the double
Poisson bracket with respect to the $F_\alpha$ in the drift ($dt$-term).
This is the generator of the diffusion, which in fact is the second-order
differential operator 
\begin{eqnarray}
\mathscr{L} = \{ \cdot , H \} + \frac{1}{2}\sum_\alpha \{ \{ \cdot ,
F_{\alpha } \} , F_{\alpha } \}.
\end{eqnarray}
We have for instance, 
\begin{equation*}
\mathscr{L} (q) = v^q, \quad \mathscr{L}(p) = v^p,
\end{equation*}
where 
\begin{eqnarray}
v^q = \frac{\partial H}{\partial p} + \frac{1}{2} \sum_\alpha\frac{\partial
F_\alpha}{\partial p} \frac{\partial^2 F_\alpha}{\partial q \partial p} - 
\frac{1}{2} \sum_\alpha \frac{\partial F_\alpha}{\partial q} \frac{%
\partial^2 F_\alpha}{\partial p^2} ,  \notag \\
v^p = -\frac{\partial H}{\partial q} - \frac{1}{2} \sum_\alpha \frac{%
\partial F_\alpha}{\partial p} \frac{\partial^2 F_\alpha}{\partial q^2} + 
\frac{1}{2} \sum_\alpha\frac{\partial F_\alpha}{\partial q} \frac{\partial^2
F_\alpha}{\partial p \partial q} .  \label{eq:itov}
\end{eqnarray}

 The It\={o}
equations for the coordinates become 
\begin{eqnarray*}
dq &=& v^q \, dt +\sum_\alpha \frac{\partial F_\alpha}{\partial p} \,
dB^\alpha_t, \\
dp &=& v^p \, dt - \sum_\alpha \frac{\partial F_\alpha}{\partial q} \,
dB^\alpha_t.
\end{eqnarray*}

Note that (\ref{eq:itov}) may be written more compactly as 
\begin{equation*}
v=J \nabla H + \frac{1}{2} \sum_\alpha J F^{\prime \prime}_\alpha J \nabla
F_\alpha ,
\end{equation*}
where the symplectic matrix $J$ and the Hessian $F^{\prime \prime}$ are
defined respectively as 
\begin{equation*}
J= \left[ 
\begin{array}{cc}
0 & 1 \\ 
-1 & 0%
\end{array}
\right] , \quad F^{\prime \prime} = \left[ 
\begin{array}{cc}
\frac{\partial^2 F }{\partial q^2} & \frac{\partial^2 F }{\partial q
\partial p} \\ 
\frac{\partial^2 F }{\partial q \partial p } & \frac{\partial^2 F }{\partial
p^2}%
\end{array}
\right] .
\end{equation*}

\begin{proposition}
The dissipation associated with the system of equations (\ref{eq:itov}) is
the sum of the Hessian determinants of the $F_\alpha$ 
\begin{eqnarray}
\gamma (q,p) = \sum_\alpha \left( \frac{\partial^2 F_\alpha }{\partial q^2} 
\frac{\partial^2 F_\alpha }{\partial p^2} - \left( \frac{\partial^2 F_\alpha 
}{\partial q \partial p} \right)^2 \right) .
\end{eqnarray}
\end{proposition}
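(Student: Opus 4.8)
The plan is to read off the dissipation directly from its definition as the negative divergence of the drift velocity field, exactly as in \erf{eq:_secII_gamma}, now applied to the \ito drift $v=(v^q,v^p)$ of \erf{eq:itov}. Thus I would evaluate
\[
\gamma(q,p)=-\left(\frac{\partial v^q}{\partial q}+\frac{\partial v^p}{\partial p}\right)
\]
and show the right-hand side collapses to the sum of Hessian determinants. Throughout, write $F_q=\partial F/\partial q$, $F_{qp}=\partial^2F/\partial q\partial p$, and so on. The first simplification is that the Hamiltonian part $J\nabla H$ contributes nothing, since $\frac{\partial}{\partial q}\frac{\partial H}{\partial p}+\frac{\partial}{\partial p}\big(-\frac{\partial H}{\partial q}\big)=0$ by equality of mixed partials; this is just the infinitesimal form of the statement in the first Remark that a Hamiltonian vector field is divergence free. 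Hence only the \ito correction $v-J\nabla H=\tfrac12\sum_\alpha JF_\alpha''J\nabla F_\alpha$ can contribute, and by linearity of the divergence it suffices to treat a single term $F=F_\alpha$ and sum over $\alpha$ at the end.

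For the single correction term I would use the compact form already recorded in the text. With $X_F=J\nabla F=(F_p,-F_q)$ the Hamiltonian vector field of $F$, its Jacobian (entries $\partial_j(X_F)_i$) is $\nabla X_F=JF''$, and a one-line check gives $JF''J\nabla F=(\nabla X_F)X_F=(X_F\cdot\nabla)X_F$. In other words, the noise-induced drift is one half of the self-advection of $X_{F_\alpha}$, summed over $\alpha$. Taking the divergence of $(X_F\cdot\nabla)X_F$ then produces two groups of terms: quadratic terms built from the second derivatives of $F$, and cubic terms each pairing a first derivative of $F$ with a third derivative of $F$.

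The crucial step is the cancellation of the cubic terms. Differentiating $X_F=(F_p,-F_q)$, the four third-order contributions are $F_pF_{qqp}$, $-F_qF_{qpp}$, $-F_pF_{qqp}$ and $F_qF_{qpp}$ (after invoking equality of mixed partials such as $F_{pqq}=F_{qqp}$), and these cancel in pairs. What survives is purely quadratic, namely $2F_{qp}^2-2F_{qq}F_{pp}$, so that $\nabla\cdot\big[\tfrac12(X_F\cdot\nabla)X_F\big]=F_{qp}^2-F_{qq}F_{pp}$. Negating and summing over $\alpha$ yields $\gamma=\sum_\alpha\big(F_{\alpha,qq}F_{\alpha,pp}-F_{\alpha,qp}^2\big)$, i.e.\ exactly the sum of Hessian determinants in the statement.

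I expect no conceptual obstacle; the statement is a direct verification. The only delicate point is the bookkeeping of the four third-order terms and confirming they cancel in pairs, together with careful tracking of the signs coming from the entries of $J$. An entirely equivalent route is to expand $\partial_q v^q+\partial_p v^p$ term by term straight from \erf{eq:itov}, avoiding the geometric reinterpretation at the cost of a longer, more error-prone computation. As a final consistency check, in the constant-inductance case the resulting $\gamma$ should reproduce the passivity bound $\gamma\ge0$ discussed after \erf{eq:_secII_gamma}, which pins down the overall sign.
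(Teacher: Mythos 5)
Your proof is correct and follows essentially the same route as the paper, which simply substitutes the It\=o drift expressions of \erf{eq:itov} into $\gamma=-\bigl(\partial v^q/\partial q+\partial v^p/\partial p\bigr)$ and computes; your reorganization via the self-advection $(X_F\cdot\nabla)X_F$ of the Hamiltonian vector field is a tidy way of bookkeeping the same cancellation of third-order terms. The signs and the final sum of Hessian determinants all check out.
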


\begin{proof}
This follows by substituting the expressions (\ref{eq:itov}) above for the It%
\={o} drift coefficients into $\gamma(q,p) = - \left( \frac{\partial v^q}{%
\partial q}+ \frac{\partial v^p}{\partial p}\right)$.
\end{proof}

The stochastic flow will, however, be canonical as the dissipation is
balanced geometrically by the fluctuations in the noise term. That is,
\begin{equation}
dq_t dp_t = -\sum_\alpha \frac{\partial F_\alpha}{\partial p} \frac{\partial F_\alpha}{\partial q}dt.
\end{equation}


The model is said to be \textit{passive} if we have dissipation $\gamma
(q,p) \geq 0$ at all phase points $(q,p)$, otherwise it is \textit{active}.
The van der Pol oscillator has $\gamma (q,p) = c (q^2 -a^2)$ for constants $%
a,c>0$ and is an example of an active model: this can be cast in the above
form with a single Wiener process, see \cite[Section V.B]{Gough99} for detail.

We would now like to realize models of the form (\ref{eq:gen}) as the
stochastic canonical models. That is, to obtain the velocity fields $v^q = 
\frac{\partial H}{\partial p} $ and $v^p = -\frac{\partial H}{\partial q} - 
\mathbb{V}_{\mathscr{D}} $, where $\mathbb{V}_{\mathscr{D}} (q,p) $ is the
voltage from the dissipative circuit elements, as the It\={o} drift of a
given stochastic canonical model. This is possible if we can solve the
following system of equations for $\{ F_\alpha \}$: 
\begin{eqnarray}
\sum_\alpha\frac{\partial F_\alpha}{\partial p} \frac{\partial^2 F_\alpha}{%
\partial q \partial p} - \sum_\alpha \frac{\partial F_\alpha}{\partial q} 
\frac{\partial^2 F_\alpha}{\partial p^2}=0 ,  \label{eq:0}\\
\sum_\alpha \frac{\partial F_\alpha}{\partial p} \frac{\partial^2 F_\alpha}{%
\partial q^2} - \sum_\alpha\frac{\partial F_\alpha}{\partial q} \frac{%
\partial^2 F_\alpha}{\partial p \partial q} =2\, \mathbb{V}_{\mathscr{D}} .
\label{eq:v}
\end{eqnarray}

In the following, we shall restrict to the problem of a single Wiener
process $B_t$ and function $F$.

\begin{lemma}
The equation (\ref{eq:0}) requires that there exists a function $\xi (q)$ such that $\xi (q) \frac{%
\partial F}{\partial p} = \frac{\partial F}{\partial q }$.
\end{lemma}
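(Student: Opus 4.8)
The plan is to read (\ref{eq:0}), in the single-noise case, as the exact statement that a certain ratio is $p$-independent. With one function $F=F(q,p)$, equation (\ref{eq:0}) becomes
\beq
\frac{\partial F}{\partial p}\,\frac{\partial^2 F}{\partial q\,\partial p}-\frac{\partial F}{\partial q}\,\frac{\partial^2 F}{\partial p^2}=0 .
\eeq
The obvious candidate for the claimed multiplier is the ratio $\xi=(\partial F/\partial q)/(\partial F/\partial p)$, defined on the open set where $\partial F/\partial p\neq 0$. Thus the entire content of the lemma is that this ratio does not depend on $p$, and the proof reduces to verifying exactly that.

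First I would differentiate $\xi$ with respect to $p$ by the quotient rule:
\beq
\frac{\partial}{\partial p}\!\left(\frac{\partial F/\partial q}{\partial F/\partial p}\right)=\frac{\dfrac{\partial^2 F}{\partial q\,\partial p}\dfrac{\partial F}{\partial p}-\dfrac{\partial F}{\partial q}\dfrac{\partial^2 F}{\partial p^2}}{\left(\partial F/\partial p\right)^2} .
\eeq
The numerator is precisely the left-hand side of (\ref{eq:0}), which vanishes by hypothesis. Hence $\partial\xi/\partial p=0$, so $\xi$ is constant in the $p$-direction and therefore a function of $q$ alone, $\xi=\xi(q)$. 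Clearing the denominator in the definition gives the asserted identity $\xi(q)\,\partial F/\partial p=\partial F/\partial q$. The computation itself is immediate, so in truth the lemma is a one-line verification once the right ratio is isolated.

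The only step needing care, and where I expect whatever mild technical work there is to sit, is the locus where $\partial F/\partial p=0$, on which the ratio is undefined. On each connected component of $\{\partial F/\partial p\neq 0\}$ the computation above produces a $C^1$ function $\xi(q)$, and the identity $\xi(q)\,\partial F/\partial p=\partial F/\partial q$ then extends to the closure by continuity, forcing $\partial F/\partial q=0$ wherever $\partial F/\partial p=0$ (the degenerate situation in which $F$ is a function of $q$ only, for which any $\xi$ works). It is worth flagging, for the later solution of (\ref{eq:v}), that the resulting relation $\partial F/\partial q=\xi(q)\,\partial F/\partial p$ is a first-order linear PDE for $F$ whose characteristics obey $dp/dq=-\xi(q)$; consequently $F$ must be a function of the single combination $p+\int^{q}\xi(q')\,dq'$, a structural consequence the subsequent construction can exploit.
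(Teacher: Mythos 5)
Your proof is correct and takes essentially the same route as the paper: both arguments read equation (\ref{eq:0}) as the statement that the ratio $\left(\partial F/\partial q\right)/\left(\partial F/\partial p\right)$ is independent of $p$ --- the paper by writing the equation as $\frac{\partial}{\partial p}\log\left(\frac{\partial F}{\partial q}\big/\frac{\partial F}{\partial p}\right)=0$, you by applying the quotient rule directly, which is the same one-line computation. Your added care on the locus $\partial F/\partial p=0$ (which the paper passes over silently, since it divides by both partial derivatives) is a minor but legitimate refinement rather than a different approach.
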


\begin{proof}
We may rewrite the equation (\ref{eq:0})  as $(\frac{\partial F}{\partial p} )^{-1} \frac{%
\partial }{ \partial p} (\frac{\partial F}{\partial p} ) - (\frac{\partial F%
}{\partial q} )^{-1} \frac{\partial }{ \partial p} (\frac{\partial F}{%
\partial q} )=0 $, and so 
\begin{equation*}
\frac{\partial }{ \partial p} \log \left( \frac{\partial F}{\partial q} / 
\frac{\partial F}{\partial p} \right) =0
\end{equation*}
and so $\frac{\partial F}{\partial q} / \frac{\partial F}{\partial p} = \xi
(q)$ for some function $\xi$ of $q$ only.
\end{proof}

We see that we may now substitute the identity $\xi (q) \frac{\partial F}{%
\partial p} = \frac{\partial F}{\partial q }$ into the second relation (\ref{eq:v}) to
get $\xi^\prime (q) ( \frac{\partial F}{\partial p})^2 = 2\, \mathbb{V}_{%
\mathscr{D}} $. However, we note that solving this partial differential equation for $F$ is in
general a difficult problem. We next show, remarkably, that there exist a
broad class of solutions to physically important cases.

\begin{theorem}
\label{Thm:main} A stochastic evolution with It\={o} drift velocity $%
(v^{q},v^{p})$ given by the phase velocity field (\ref{eq:drift_velocity})
can be achieved by a stochastic canonical model driven by a pair of
independent Wiener processes $(B_{t}^{1},B_{t}^{2})$, for the resistance and
memristance components respectively, with the following choices: 
\begin{eqnarray}
H &=&\frac{p^{2}}{2L_{0}}+\Phi _{\mathscr{C}}(q)+\frac{1}{2}W^{\prime }(p)q+%
\frac{1}{2}G^{\prime }(q)p-e(t)q,  \notag  \label{eq:hello} \\
F_{1} &=&\frac{q^{2}}{2c}+cW(p), \\
F_{2} &=&\frac{p^{2}}{2\ell }+\ell G(q),
\end{eqnarray}%
where $c$ and $\ell $ are constants with units of capacitance and
inductance, respectively, and we choose $W(p)$ and $G(q)$ such that 
\begin{equation}\label{eq:WG}
W^\prime(p)=\Psi^\prime _{\mathscr{R}}(p), ~~~ G^{\prime \prime }(q)=\frac{1}{L_{0}}M(q).
\end{equation}
\end{theorem}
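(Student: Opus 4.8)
The plan is to treat this as a direct verification. Since $H$, $F_1$, and $F_2$ are handed to us explicitly, it suffices to substitute their derivatives into the It\={o} drift formula (\ref{eq:itov}) and confirm that the resulting $(v^q,v^p)$ coincides with the phase velocity field (\ref{eq:drift_velocity}). The canonical (Poisson-bracket preserving) nature of the flow needs no separate argument: because we take $w^q=\partial H/\partial p$, $w^p=-\partial H/\partial q$ and $\sigma_\alpha^q=\partial F_\alpha/\partial p$, $\sigma_\alpha^p=-\partial F_\alpha/\partial q$, the drift and both diffusion vector fields are Hamiltonian by construction, and Theorem~\ref{thm:JMP1999} then delivers canonicity immediately. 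So the entire content to be checked is the drift.

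First I would record the partial derivatives of the two generators. The decisive structural feature is that both $F_1=\frac{q^2}{2c}+cW(p)$ and $F_2=\frac{p^2}{2\ell}+\ell G(q)$ are additively separable in $q$ and $p$, so the mixed second derivatives $\partial^2 F_\alpha/\partial q\partial p$ vanish identically. This annihilates the first sum in the $v^q$ line and the second sum in the $v^p$ line of (\ref{eq:itov}), leaving only the pieces built from $\partial^2 F_\alpha/\partial p^2$ and $\partial^2 F_\alpha/\partial q^2$. A short computation then gives the surviving It\={o} corrections as $-\tfrac12\,(qW''(p)+G'(q))$ in $v^q$ and $-\tfrac12\,(W'(p)+p\,G''(q))$ in $v^p$.

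Next I would compute $\partial H/\partial p$ and $-\partial H/\partial q$ from the given $H$ and combine them with these corrections. Here lies the one point deserving genuine attention rather than mechanical bookkeeping: the cross terms $\tfrac12 W'(p)q$ and $\tfrac12 G'(q)p$ in $H$ are engineered so that their contribution to $\partial H/\partial p$, namely $\tfrac12\,(qW''(p)+G'(q))$, \emph{exactly cancels} the It\={o} correction in $v^q$, yielding the clean $v^q=p/L_0$ with no spurious diffusive term. The same cross terms contribute $-\tfrac12\,(W'(p)+p\,G''(q))$ to $-\partial H/\partial q$, which this time \emph{reinforces} the It\={o} correction in $v^p$, so the two half-contributions add up to the full terms $-W'(p)-p\,G''(q)$.

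Finally I would impose the defining relations $W'(p)=\Psi'_{\mathscr{R}}(p)$ and $G''(q)=M(q)/L_0$ from (\ref{eq:WG}). Substituting these, together with the $\Phi'_{\mathscr{C}}(q)-e(t)$ supplied by $\partial H/\partial q$, reproduces exactly $v^p=-\Phi'_{\mathscr{C}}(q)-\Psi'_{\mathscr{R}}(p)-\frac{1}{L_0}M(q)p+e(t)$, matching (\ref{eq:drift_velocity}). The only real obstacle is not conceptual but arithmetic: keeping the factors of $\tfrac12$ straight and recognizing the sign/coefficient asymmetry whereby the cross terms must \emph{cancel} in the $q$-equation yet \emph{reinforce} in the $p$-equation. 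That asymmetry is precisely the reason the specific form of $H$ works, and checking it is the heart of the verification.
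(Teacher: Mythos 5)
Your proposal is correct and follows essentially the same route as the paper's own proof: direct substitution of the given $H$, $F_1$, $F_2$ into the It\={o} drift formula (\ref{eq:itov}), with canonicity guaranteed by Theorem~\ref{thm:JMP1999} since all vector fields are Hamiltonian by construction. Your computed It\={o} corrections, the cancellation in the $q$-equation, and the reinforcement in the $p$-equation all match the paper's calculation (\ref{eq:xxx}) exactly.
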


\begin{proof}
Substituting in the expressions for the It\={o} drift in (\ref{eq:itov})
leads to 
\begin{eqnarray}
v^{q} &=&\frac{\partial H}{\partial p}+\frac{1}{2}\sum_{\alpha =1,2}\frac{%
\partial F_{\alpha }}{\partial p}\frac{\partial ^{2}F_{\alpha }}{\partial
q\partial p}-\frac{1}{2}\sum_{\alpha =1,2}\frac{\partial F_{\alpha }}{%
\partial q}\frac{\partial ^{2}F_{\alpha }}{\partial p^{2}},  \notag \\
&=&\left( \frac{p}{L_{0}}+\frac{1}{2}W^{\prime \prime }(p)q+\frac{1}{2}%
G^{\prime }(q)\right)  \notag \\
&&-\frac{1}{2}W^{\prime \prime }(p)q-\frac{1}{2}G^{\prime }(q)  \notag \\
&=&\frac{p}{L_{0}},  \notag \\
v^{p} &=&-\frac{\partial H}{\partial q}-\frac{1}{2}\sum_{\alpha =1,2}\frac{%
\partial F_{\alpha }}{\partial p}\frac{\partial ^{2}F_{\alpha }}{\partial
q^{2}}+\frac{1}{2}\sum_{\alpha =1,2}\frac{\partial F_{\alpha }}{\partial q}%
\frac{\partial ^{2}F_{\alpha }}{\partial p\partial q}  \notag \\
&=&-\left( \Phi _{\mathscr{C}}^{\prime }(q)+\frac{1}{2}W^{\prime }(p)+\frac{1%
}{2}G^{\prime \prime }(q)p-e(t)\right)  \notag \\
&&-\frac{1}{2}W^{\prime }(p)-\frac{1}{2}G^{\prime \prime }(q)p  \notag \\
&=&-\Phi _{\mathscr{C}}^{\prime }(q)-W^{\prime }(p)-G^{\prime \prime
}(q)p+e(t).  \label{eq:xxx}
\end{eqnarray}%
The system of equations has the desired form, and it remains only to choose $%
W$ and $G$ as in Eq. (\ref{eq:WG}) to specify the required resistance and
memristance.
\end{proof}


To see explicitly what the stochastic differential equations (SDEs) for $q_{t}$ and 
$p_{t}$ look like for this model we first observe that 
\begin{eqnarray*}
dq_{t}
&=&v^{q}(q_{t},p_{t})\,dt+\{q,F_{1}\}_{t}\,dB_{t}^{1}+\{q,F_{2}\}_{t}%
\,dB_{t}^{2} \\
&=&v^{q}(q_{t},p_{t})\,dt+cW^{\prime }(q_{t})\,dB_{t}^{1}+\frac{p_{t}}{\ell }%
\,dB_{t}^{2},
\end{eqnarray*}%
with a similar expression for the momentum. According to Theorem \ref{Thm:main}, the phase coordinate SDEs are
therefore 
\begin{eqnarray}
dq_{t} &=&\frac{p_{t}}{L_{0}}\,dt+cW^{\prime }(p_{t})\,dB_{t}^{1}+\frac{p_{t}%
}{\ell }\,dB_{t}^{2}, \\
dp_{t} &=&\left( -\Phi _{\mathscr{C}}^{\prime }\left( q_{t}\right) -\Psi _{%
\mathscr{R}}^{\prime }(p_{t})-\frac{1}{L_{0}}M\left( q_{t}\right)
p_{t}+e\left( t\right) \right) \ dt  \notag \\
&&-\frac{q_{t}}{c}\,dB_{t}^{1}-\ell G^{\prime }(q_{t})\,dB_{t}^{2}.
\end{eqnarray}%
We note that the dissipation here is 
\begin{equation}
\gamma (q,p)=W^{\prime \prime }(p)+G^{\prime \prime }(q) = \frac{\mathbb{R} (p) + M(q)}{L_0},
\end{equation}%
which is consistent to Eq. (\ref{eq:_secII_gamma}). 
Moreover, we also have non-trivial fluctuations, and for instance, 
\begin{equation}
dq_{t}\,dp_{t}=-\left( q_{t}W^{\prime }(p_{t})+p_{t}G^{\prime
}(q_{t})\right) \,dt.
\end{equation}%
The fluctuations and dissipation balance out to preserve the Poisson
structure.

It is a feature of canonical diffusions that the stochastic process $p_t$ is
no longer simply $\frac{1}{L_0} \dot{q}_t$. We would need a noise that was
more singular than Weiner processes if we wanted to have $dq_t = \frac{1}{L_0%
} p_t \, dt$ and still preserve the Poisson structure. In fact, to ensure
that the $dB^\alpha_t$ terms did not arise in the $dq_t$ SDE, we would need
that $\{ q,F_\alpha \}=0$ for each $\alpha$. But this would imply that each $%
F_\alpha$ is a function of $q$ only, in which case the drift terms $v^q$ and 
$v^p$ would be purely Hamiltonian, since we also have $\{ \{ p , F_\alpha \}
, F_\alpha \} \equiv 0$.

\subsection{Symplectic Noise Models}

In \cite{Gough_JSP_2015} the concept of canonically conjugate Wiener
processes was introduced. Here one extends the Poisson bracket structure to
include the noise so that in addition to each Wiener process $B_{k}$, which
we now relabel as $Q_{k}$, we have a conjugate process $P_{k}$ so the
collection $\left( Q_{k},P_{k}\right) $ have the statistics of 
independent Wiener processes, so that 
\begin{gather*}
dQ_{j}(t)\,dQ_{k}(t)=\delta _{jk}dt=dP_{j}(t)dP_{k}(t), \\
dQ_{j}(t)\,dP_{k}(t)=0=dP_{j}(t)\,dQ_{k}(t),
\end{gather*}%
but additionally satisfy 
\begin{equation}
\{Q_{j}(t),P_{k}(s)\}=\Gamma \,\delta _{jk}\,\mathrm{min}(t,s),
\label{eq:gamma}
\end{equation}%
where $\Gamma >0$. We shall restrict to just a single canonical pair $\left(
Q,P\right) $ in the following, although the generalization is
straightforward. For stochastic processes $X_{t}$ and $Y_{t}$ adapted to the
filtration generated by the Weiner processes $(Q_{s},P_{s})_{s\leq t}$, we have the
infinitesimal relations 
\begin{eqnarray*}
\{X_{t}\,dQ_{j}(t),Y_{t}\} &=&\{X_{t},Y_{t}\}\,dQ_{j}(t), \\
\{X_{t}\,dP_{j}(t),Y_{t}\} &=&\{X_{t},Y_{t}\}\,dP_{j}(t),
\end{eqnarray*}%
and 
\begin{eqnarray*}
&&\{X_{t}dQ_{j}(t),Y_{t}dP_{k}(t)\} \\
&=&\{X_{t},Y_{t}\}\,dQ_{j}(t)dP_{k}(t)+X_{t}Y_{t}\{dQ_{j}(t),dP_{k}(t)\} \\
&=&\delta _{jk}\,\Gamma \,X_{t}Y_{t}\,dt.
\end{eqnarray*}

Let us now consider a diffusion on phase space driven by a canonical pair of
Wiener processes: 
\begin{eqnarray}
dq_{t}  = w^{q}\left( q_{t},p_{t}\right) dt+\sigma ^{q}\left(
q_{t},p_{t}\right) dQ_{t}+\varsigma ^{q}\left( q_{t},p_{t}\right) dP_{t}, 
  \label{sys_1a} \\
dp_{t} = w^{p}\left( q_{t},p_{t}\right) dt+\sigma ^{p}\left(
q_{t},p_{t}\right) dQ_{t}+\varsigma ^{p}\left( q_{t},p_{t}\right) dP_{t}, 
\label{sys_1b}
\end{eqnarray}%
with the coefficients assumed to be Lipschitz, etc., so as to guarantee
existence and uniqueness of solution. The main result, Theorem 2, of \cite%
{Gough_JSP_2015} is stated below.

\begin{lemma}
\label{lem:Gough15} The diffusion (\ref{sys_1a})-(\ref{sys_1b}) on phase
space driven by a canonically conjugate pair of Wiener processes is
canonical for the full Poisson brackets if the vector fields $w,\sigma $ and 
$\varsigma $ are all Hamiltonian, say $w\left( \cdot \right) =\left\{ \cdot
,H\right\} $, $\sigma \left( \cdot \right) =\left\{ \cdot ,F\right\} $ and $%
\varsigma \left( \cdot \right) =\left\{ \cdot ,G\right\} $, in which case we
have 
\begin{equation*}
df_{t}=(\mathcal{L}f)_{t}\,dt+ \{f,F \}_{t}\,dQ (t)+\{f,G \}_{t}\,dP (t)
\end{equation*}
with generator 
\begin{equation*}
\mathcal{L}=\{\cdot ,H\}+\frac{1}{2} \big\{\{\cdot ,F\},F\big\}+ \frac{1}{2} \big\{%
\{\cdot ,G\},G\big\}  +u.\nabla ,
\end{equation*}
where $u$ is a vector field with 
\begin{equation} \label{eq:nabla_u}
\nabla .u=-\Gamma \{F,G\}.
\end{equation}
 In this case,
the dissipation function on phase space is given by 
\begin{equation*}
\gamma \left( q,p\right) =\left\{ \frac{\partial F}{\partial q},\frac{%
\partial F}{\partial p}\right\} +\left\{ \frac{\partial G}{\partial q},\frac{%
\partial G}{\partial p}\right\} +\Gamma \left\{ F,G\right\} .
\end{equation*}
\end{lemma}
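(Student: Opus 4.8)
The plan is to compute the Itô differential of $f_t=f(q_t,p_t)$, read off the generator $\mathcal{L}$, and then verify that $\{f_t,g_t\}$ evolves identically to $\{f,g\}_t$ by matching the two sides of the stochastic product rule for Poisson brackets. In the Stratonovich reading of (\ref{sys_1a})-(\ref{sys_1b}), as for (\ref{eq:strat}), the ordinary chain rule together with the Hamiltonian form of the fields gives $df_t=\{f,H\}_t\,dt+(u\cdot\nabla f)_t\,dt+\{f,F\}_t\circ dQ+\{f,G\}_t\circ dP$, where $u$ is the part of the drift beyond $\{\cdot,H\}$ and $\{f,F\}=f_q\sigma^q+f_p\sigma^p$ (similarly for $G$). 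Converting each Stratonovich noise term to Itô using $(dQ)^2=(dP)^2=dt$ and $dQ\,dP=0$ adds $\tfrac12\{\{f,F\},F\}+\tfrac12\{\{f,G\},G\}$ to the drift, yielding exactly the stated generator $\mathcal{L}$ with noise coefficients $\{f,F\}$ and $\{f,G\}$.

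For canonicity I would compare $d\{f_t,g_t\}=\{df_t,g_t\}+\{f_t,dg_t\}+\{df_t,dg_t\}$ with $d\big(\{f,g\}_t\big)=(\mathcal{L}\{f,g\})\,dt+\{\{f,g\},F\}\,dQ+\{\{f,g\},G\}\,dP$. The $dQ$ and $dP$ coefficients match automatically by the Jacobi identity $\{\{f,F\},g\}+\{f,\{g,F\}\}=\{\{f,g\},F\}$. The drift contribution of $\{df_t,dg_t\}$ is where the symplectic noise enters: the same-noise rule $\{X_t\,dQ,Y_t\,dQ\}=\{X_t,Y_t\}\,dt$ (from $(dQ)^2=dt$ and $\{dQ,\cdot\}=0$ on adapted processes) produces $\{\{f,F\},\{g,F\}\}+\{\{f,G\},\{g,G\}\}$, while the conjugate rule $\{X_t\,dQ,Y_t\,dP\}=\Gamma X_tY_t\,dt$ produces $\Gamma\big(\{f,F\}\{g,G\}-\{f,G\}\{g,F\}\big)$. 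Canonicity therefore amounts to the identity $\mathcal{L}\{f,g\}-\{\mathcal{L}f,g\}-\{f,\mathcal{L}g\}$ equalling the sum of these two contributions.

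I would evaluate the defect $\mathcal{L}\{f,g\}-\{\mathcal{L}f,g\}-\{f,\mathcal{L}g\}$ piecewise. The Hamiltonian piece $\{\cdot,H\}$ is divergence-free and contributes nothing by Jacobi; the double-bracket pieces reproduce exactly $\{\{f,F\},\{g,F\}\}$ and $\{\{f,G\},\{g,G\}\}$ — the derivation identity already behind Theorem~\ref{thm:JMP1999} — matching the same-noise contributions; and $u\cdot\nabla$ contributes $-(\nabla\cdot u)\{f,g\}$, since in two dimensions a vector field fails to be a Poisson derivation exactly by its divergence. What survives is $-(\nabla\cdot u)\{f,g\}=\Gamma\big(\{f,F\}\{g,G\}-\{f,G\}\{g,F\}\big)$, and the Pl\"ucker-type identity $\{f,F\}\{g,G\}-\{f,G\}\{g,F\}=\{f,g\}\{F,G\}$ collapses the right-hand side to $\Gamma\{f,g\}\{F,G\}$. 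Cancelling the common factor $\{f,g\}$ leaves the single, $f,g$-independent requirement $\nabla\cdot u=-\Gamma\{F,G\}$, which is (\ref{eq:nabla_u}). The dissipation then follows by reading $\gamma=-\nabla\cdot v$ off the Itô drift $v=(\mathcal{L}q,\mathcal{L}p)$ and using the Proposition for the double-bracket parts: $\gamma=\{\partial_qF,\partial_pF\}+\{\partial_qG,\partial_pG\}-\nabla\cdot u=\{\partial_qF,\partial_pF\}+\{\partial_qG,\partial_pG\}+\Gamma\{F,G\}$.

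The main obstacle I anticipate is the careful bookkeeping of the infinitesimal Poisson-bracket calculus in $\{df_t,dg_t\}$: one must justify both the same-noise rule and the conjugate-noise rule from $\{Q_j(t),P_k(s)\}=\Gamma\delta_{jk}\min(t,s)$ and the adaptedness of the coefficients, and then chain together the three algebraic identities — the double-bracket derivation identity, the two-dimensional bivector-divergence identity for $u$, and the Pl\"ucker identity — so that the symplectic cross-term is exhibited as the one contribution no Hamiltonian drift can absorb, and hence the term that fixes $\nabla\cdot u$ uniquely.
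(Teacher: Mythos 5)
The paper does not actually prove Lemma~\ref{lem:Gough15}: it is imported verbatim as Theorem~2 of \cite{Gough_JSP_2015}, so there is no internal argument to compare against. Your reconstruction is, however, correct and fits cleanly with the machinery the paper does develop. The bookkeeping all checks out: the Stratonovich-to-It\={o} conversion yields the double-bracket terms exactly as in the non-symplectic case of Theorem~\ref{thm:JMP1999}; the noise coefficients of $d\{f_t,g_t\}$ and of $d(\{f,g\}_t)$ agree by Jacobi; the same-noise rule $\{X_t\,dQ,Y_t\,dQ\}=\{X_t,Y_t\}\,dt$ follows from $\{Q(t),Q(s)\}=0$ by the same reasoning the paper uses for the cross rule; the double-bracket defect $\mathcal{L}\{f,g\}-\{\mathcal{L}f,g\}-\{f,\mathcal{L}g\}$ does reduce to $\{\{f,F\},\{g,F\}\}+\{\{f,G\},\{g,G\}\}$, cancelling the same-noise contribution of $\{df_t,dg_t\}$; and in one degree of freedom the identity $\{f,F\}\{g,G\}-\{f,G\}\{g,F\}=\{f,g\}\{F,G\}$ together with the fact that a planar vector field fails to be a Poisson derivation by exactly $-(\nabla.u)\{f,g\}$ isolates \erf{eq:nabla_u} as the precise condition. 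The dissipation formula then follows as you say, with $\bigl\{\partial F/\partial q,\partial F/\partial p\bigr\}$ recognized as the Hessian determinant of the earlier Proposition. One point worth making explicit: as quoted, the Lemma says $w(\cdot)=\{\cdot,H\}$ yet writes $u.\nabla$ into the generator; you silently (and correctly) repair this by reading the drift as Hamiltonian \emph{plus} a non-Hamiltonian correction $u$ whose divergence is then forced by canonicity --- this is the reading the paper itself relies on in Theorem~\ref{Thm:main2} and Corollary~\ref{cor:only}, but it is not what the hypothesis of the Lemma literally states, and your write-up would benefit from flagging that the bare Hamiltonian drift cannot be canonical unless $\{F,G\}=0$.
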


\begin{example}
As an example, we consider the linear $LC$ model with the Hamiltonian $H_{0}$
defined in Eq. (\ref{eq:H0}) and 
\begin{equation*}
F=p,~G=-q.
\end{equation*}%
By Eq. (\ref{eq:nabla_u}) we have $\nabla .u=-\Gamma $. A particular solution is given by $u^q=0, u^p = -\Gamma p$. As  a result,  the equations of
motion are 
\begin{eqnarray}
dq_{t} &=&\frac{p_{t}}{L_{0}}dt+dQ_{t}, \\
dp_{t} &=&-\left( \frac{1}{C_{0}}q_{t}+\Gamma p_{t}-e(t)\right) dt+dP_{t}.
\end{eqnarray}%
Finally, it is easy to show that 
\begin{equation*}
\gamma \left( q,p\right) \equiv \Gamma .
\end{equation*}%
It is easy to see that $dq_{t}dp_{t}=0$. However, noises $(Q_t,P_t)$ do leave their imprint by adding the term $-\Gamma p_t$ to the momentum.  
\end{example}

In the above example, the proposed particular solution to $\nabla .u=-\Gamma $ can be given by
\begin{equation*}
u.\nabla =-\Gamma p\frac{\partial }{\partial p}.
\end{equation*}
The next two results show how to find the vector field $u$ in general.

\begin{lemma}
Given the family of twice-differentiable functions $( F_{\alpha }$, $%
G_{\alpha } )$, a particular solution to the equation
\begin{equation}\label{eq:nabla_u_alpha}
\nabla .u=-\Gamma \sum_\alpha \{F_\alpha,G_\alpha\}
\end{equation}
 is given by the vector
field $u_{0}$ with components 
\begin{eqnarray*}
u_{0}^{q} &=&\frac{1}{2}\Gamma \sum_{\alpha }\left( G_{\alpha }\frac{%
\partial F_{\alpha }}{\partial p}-F_{\alpha }\frac{\partial G_{\alpha }}{%
\partial p}\right) , \\
u_{0}^{p} &=&\frac{1}{2}\Gamma \sum_{\alpha }\left( F_{\alpha }\frac{%
\partial G_{\alpha }}{\partial q}-G_{\alpha }\frac{\partial F_{\alpha }}{%
\partial q}\right) .
\end{eqnarray*}
The general solution is then the particular solution plus an arbitrary
Hamiltonian vector field.
\end{lemma}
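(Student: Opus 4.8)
The plan is to verify the stated $u_{0}$ directly by computing its divergence, and then to characterise the general solution via the equivalence between divergence-free and Hamiltonian vector fields established in Section~\ref{sec:General Circuits}. Since the claim is an explicit identity, this is a verification rather than a construction, and the work is essentially bookkeeping.

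First I would write $\nabla \cdot u_{0} = \frac{\partial u_{0}^{q}}{\partial q} + \frac{\partial u_{0}^{p}}{\partial p}$ and differentiate the two given components term by term. Expanding $\frac{\partial u_{0}^{q}}{\partial q}$ produces, for each $\alpha$, first-order products such as $\frac{\partial G_{\alpha}}{\partial q}\frac{\partial F_{\alpha}}{\partial p}$ together with second-order terms weighted by $G_{\alpha}\frac{\partial^{2}F_{\alpha}}{\partial q\partial p}$ and $F_{\alpha}\frac{\partial^{2}G_{\alpha}}{\partial q\partial p}$; the expansion of $\frac{\partial u_{0}^{p}}{\partial p}$ is analogous but carries the mixed partial $\partial^{2}/\partial p\,\partial q$.

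The key step is to observe that, because the coefficients are twice-differentiable, equality of mixed second partials forces the second-order terms to cancel in pairs upon adding the two expressions: the $G_{\alpha}$-weighted Hessian contribution from $u_{0}^{q}$ cancels the one from $u_{0}^{p}$, and likewise for the $F_{\alpha}$-weighted terms. What survives, with the prefactor $\frac{1}{2}$ absorbed by the doubling of each remaining product, is
\[
\nabla \cdot u_{0} = \Gamma \sum_{\alpha}\left( \frac{\partial G_{\alpha}}{\partial q}\frac{\partial F_{\alpha}}{\partial p} - \frac{\partial F_{\alpha}}{\partial q}\frac{\partial G_{\alpha}}{\partial p}\right).
\]
Reading the bracketed quantity in the paper's sign convention $\{F_{\alpha},G_{\alpha}\} = \frac{\partial F_{\alpha}}{\partial q}\frac{\partial G_{\alpha}}{\partial p} - \frac{\partial G_{\alpha}}{\partial q}\frac{\partial F_{\alpha}}{\partial p}$ identifies it as $-\{F_{\alpha},G_{\alpha}\}$, giving $\nabla\cdot u_{0} = -\Gamma\sum_{\alpha}\{F_{\alpha},G_{\alpha}\}$, which is exactly Eq.~(\ref{eq:nabla_u_alpha}).

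For the general solution, if $u$ is any other solution then $u - u_{0}$ has vanishing divergence, and by the discussion in Section~\ref{sec:General Circuits} (the condition following Eq.~(\ref{eq:can})) a divergence-free field on phase space is precisely a Hamiltonian vector field; hence every solution is $u_{0}$ plus an arbitrary Hamiltonian field. I do not expect a genuine obstacle here: the only care required is tracking the sign convention of the Poisson bracket and applying the mixed-partial cancellation consistently across both components.
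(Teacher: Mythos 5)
Your proof is correct and follows essentially the same route as the paper's: direct computation of $\nabla\cdot u_{0}$, cancellation of the second-order terms by equality of mixed partials, and identification of the surviving first-order products as $-\Gamma\sum_{\alpha}\{F_{\alpha},G_{\alpha}\}$ under the stated sign convention. Your closing observation that any two solutions differ by a divergence-free, hence Hamiltonian, vector field correctly supplies the one step the paper leaves implicit.
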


\begin{proof}
We verify directly that 
\begin{eqnarray*}
\frac{\partial u_{0}^{q}}{\partial q} &=&-\frac{1}{2}\Gamma \sum_{\alpha
}\left\{ F_{\alpha },G_{a}\right\} \\
&&+\frac{1}{2}\Gamma \sum_{\alpha }\left( G_{\alpha }\frac{\partial
^{2}F_{\alpha }}{\partial q\partial p}-F_{\alpha }\frac{\partial
^{2}G_{\alpha }}{\partial q\partial p}\right) , \\
\frac{\partial u_{0}^{p}}{\partial p} &=&-\frac{1}{2}\Gamma \sum_{\alpha
}\left\{ F_{\alpha },G_{a}\right\} \\
&&-\frac{1}{2}\Gamma \sum_{\alpha }\left( G_{\alpha }\frac{\partial
^{2}F_{\alpha }}{\partial p\partial q}-F_{\alpha }\frac{\partial
^{2}G_{\alpha }}{\partial p\partial q}\right),
\end{eqnarray*}
so adding the terms gives $\frac{\partial u_{0}^{q}}{\partial q}+\frac{%
\partial u_{0}^{p}}{\partial p}=-\Gamma \sum_{\alpha }\left\{ F_{\alpha
},G_{a}\right\} $ as required.
\end{proof}

\begin{corollary}
\label{cor:only} Given the family of twice-differentiable functions $%
F_{\alpha },G_{\alpha }$, a particular solution to the equation (\ref{eq:nabla_u_alpha}) is given by the
vector field $u$ with components 
\begin{eqnarray}
u^{q} &=&-\Gamma \sum_{\alpha }F_{\alpha }\frac{\partial G_{\alpha }}{%
\partial p}, \label{eq:uq_cor} \\
u^{p} &=&\Gamma \sum_{\alpha }F_{\alpha }\frac{\partial G_{\alpha }}{%
\partial q}. \label{eq:up_cor}
\end{eqnarray}
\end{corollary}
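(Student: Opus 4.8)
The goal is to verify that the vector field $u$ with components \eqref{eq:uq_cor}--\eqref{eq:up_cor} satisfies the divergence equation \eqref{eq:nabla_u_alpha}, namely $\nabla . u = -\Gamma \sum_\alpha \{F_\alpha, G_\alpha\}$. The plan is to simply compute the divergence $\frac{\partial u^q}{\partial q} + \frac{\partial u^p}{\partial p}$ directly from the explicit expressions and collect terms. This is a purely mechanical differentiation, so there is no genuine obstacle; the only point requiring a little care is keeping track of which terms cancel.

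First I would differentiate $u^q = -\Gamma \sum_\alpha F_\alpha \frac{\partial G_\alpha}{\partial p}$ with respect to $q$ using the product rule, obtaining
\begin{equation*}
\frac{\partial u^q}{\partial q} = -\Gamma \sum_\alpha \left( \frac{\partial F_\alpha}{\partial q} \frac{\partial G_\alpha}{\partial p} + F_\alpha \frac{\partial^2 G_\alpha}{\partial q \partial p} \right).
\end{equation*}
Next I would differentiate $u^p = \Gamma \sum_\alpha F_\alpha \frac{\partial G_\alpha}{\partial q}$ with respect to $p$, giving
\begin{equation*}
\frac{\partial u^p}{\partial p} = \Gamma \sum_\alpha \left( \frac{\partial F_\alpha}{\partial p} \frac{\partial G_\alpha}{\partial q} + F_\alpha \frac{\partial^2 G_\alpha}{\partial p \partial q} \right).
\end{equation*}

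The key step is then to add these two expressions and observe the cancellation. By equality of mixed partials, the two terms proportional to $F_\alpha \frac{\partial^2 G_\alpha}{\partial q \partial p}$ have opposite signs and cancel exactly. What remains is
\begin{equation*}
\nabla . u = -\Gamma \sum_\alpha \left( \frac{\partial F_\alpha}{\partial q} \frac{\partial G_\alpha}{\partial p} - \frac{\partial F_\alpha}{\partial p} \frac{\partial G_\alpha}{\partial q} \right),
\end{equation*}
and recognizing the bracketed quantity as the Poisson bracket $\{F_\alpha, G_\alpha\}$ (per the definition in the Remark) yields $\nabla . u = -\Gamma \sum_\alpha \{F_\alpha, G_\alpha\}$, as required. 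Since the preceding lemma already established that the general solution is any particular solution plus an arbitrary Hamiltonian (hence divergence-free) vector field, exhibiting this one particular solution suffices; indeed $u$ differs from the symmetric solution $u_0$ of the previous lemma precisely by such a Hamiltonian field, which offers an alternative one-line derivation should the direct computation be preferred to be kept brief.
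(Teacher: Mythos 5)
Your proof is correct and follows essentially the same route as the paper, which likewise verifies the claim by direct substitution into the divergence equation (and, as you note in passing, also records the alternative of comparing with the particular solution $u_0$ of the preceding lemma via a Hamiltonian vector field). The computation and the cancellation of the mixed partials are exactly as the paper intends.
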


\begin{proof}
This can be verified by direct substitution into the equation (\ref{eq:nabla_u_alpha}). Alternatively, note that 
\begin{equation*}
u^{q}=u_{0}^{q}+\frac{\partial K}{\partial p},\quad u^{p}=u_{0}^{p}-\frac{%
\partial K}{\partial q},
\end{equation*}
with $K=\sum_{\alpha }F_{\alpha }G_{a}$.
\end{proof}

The following result gives the symplectic stochastic model which realizes the velocity fields (\ref{eq:drift_velocity}).

\begin{theorem}
\label{Thm:main2} A stochastic evolution with It\={o} drift velocity $%
(v^{q},v^{p})$ given by the phase velocity field (\ref{eq:drift_velocity})
can be achieved by the stochastic canonical model driven by a pair of
independent symplectic Wiener processes $(Q_{1},P_{1})$ and $(Q_{2},P_{2})$,
for the resistance and memristance components respectively, with the
following choices: 
\begin{eqnarray}
H &=&\frac{p^{2}}{2L_{0}}+\Phi _{\mathscr{C}}(q)-e(t)q,  \label{eq:hello2} \\
F_{1} &=&\varrho (p),\quad G_{1}=-q,  \label{eq:F1G1} \\
F_{2} &=&p,\quad G_{2}=-\mu (q),  \label{eq:F2G2}
\end{eqnarray}%
where $\varrho (p)=\frac{1}{\Gamma }\Psi _{\mathscr{R}}^{\prime }(p)$ and $%
\mu ^{\prime }(q)=\frac{1}{\Gamma L_{0}}M(q)$. Here we take $u$ to be the
vector field obtained from Corollary \ref{cor:only}. The symplectic
stochastic model of the system (\ref{eq:drift_velocity}) is 
\begin{eqnarray}
dq_{t} &=&v^{q}dt+\frac{\mathbb{R}(p)}{\Gamma L_{0}}dQ_{1}(t)+dQ_{2}(t),
\label{eq:sym_a} \\
dp_{t} &=&v^{p}dt-dP_{1}(t)-\frac{M^{\prime }(q)}{\Gamma L_{0}}dP_{2}(t),
\label{eq:sym_b}
\end{eqnarray}%
with It\={o} drift 
\begin{eqnarray}
v^{q} &=&\frac{1}{L_{0}}p,  \label{eq:vq} \\
v^{p} &=&-\Phi _{\mathscr{C}}^{\prime }\left( q\right) +e\left( t\right)
-\Psi _{\mathscr{R}}^{\prime }(p)-\frac{1}{L_{0}}M\left( q\right) p.
\label{eq:vp}
\end{eqnarray}
The negative divergence of the velocity field is 
\begin{equation}\label{eq:sym_gamma}
\gamma \left( q,p\right) = \frac{1}{L_{0}}\left( \mathbb{R}\left( p\right) +M\left( q\right) \right).
\end{equation}
\end{theorem}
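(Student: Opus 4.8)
The plan is to invoke the natural multi-pair extension of Lemma \ref{lem:Gough15}, summing its single-pair formulas over the two canonical noise pairs $\alpha=1,2$. With $H$, $F_\alpha$ and $G_\alpha$ chosen as in (\ref{eq:hello2})--(\ref{eq:F2G2}), the three vector fields $w=\{\cdot,H\}$, $\sigma_\alpha=\{\cdot,F_\alpha\}$ and $\varsigma_\alpha=\{\cdot,G_\alpha\}$ are Hamiltonian by construction, so the flow is automatically canonical and the theorem reduces to three explicit Poisson-bracket checks: the noise coefficients, the It\={o} drift $\mathcal{L}(q),\mathcal{L}(p)$, and the dissipation $\gamma$.

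First I would read off the diffusion coefficients as single brackets. For the position one finds $\{q,F_1\}=\varrho'(p)=\tfrac{1}{\Gamma}\Psi''_{\mathscr{R}}(p)=\tfrac{\mathbb{R}(p)}{\Gamma L_0}$ and $\{q,F_2\}=\{q,p\}=1$, while $\{q,G_\alpha\}=0$ because each $G_\alpha$ depends on $q$ alone; for the momentum $\{p,F_\alpha\}=0$ and only $\{p,G_1\}=1$, $\{p,G_2\}=\mu'(q)$ survive. These reproduce the noise terms of (\ref{eq:sym_a})--(\ref{eq:sym_b}).

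The decisive simplification occurs in the drift. Every iterated bracket in the generator vanishes: each first bracket $\{f,F_\alpha\}$ or $\{f,G_\alpha\}$ (for $f\in\{q,p\}$) is a function of $p$ alone or of $q$ alone, and since $F_\alpha$ (respectively $G_\alpha$) is a function of that same single variable, the second bracket $\{\{f,F_\alpha\},F_\alpha\}$ is identically zero, and likewise for $G_\alpha$. Consequently the whole dissipative part of the drift is carried by the $u\cdot\nabla$ term. Taking $u$ from Corollary \ref{cor:only}, the fact that $\partial G_\alpha/\partial p=0$ forces $u^q=0$, whereas $u^p=\Gamma\sum_\alpha F_\alpha\,\partial_q G_\alpha=-\Gamma\varrho(p)-\Gamma p\,\mu'(q)=-\Psi'_{\mathscr{R}}(p)-\tfrac{1}{L_0}M(q)\,p$. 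Adding the Hamiltonian contribution $\{q,H\}=p/L_0$ and $\{p,H\}=-\Phi'_{\mathscr{C}}(q)+e(t)$ then gives exactly (\ref{eq:vq})--(\ref{eq:vp}).

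Finally I would verify the dissipation through the formula in Lemma \ref{lem:Gough15}. The Hessian-type brackets $\{\partial_q F_\alpha,\partial_p F_\alpha\}$ and $\{\partial_q G_\alpha,\partial_p G_\alpha\}$ vanish for the same structural reason, leaving only the symplectic coupling: $\gamma=\Gamma\sum_\alpha\{F_\alpha,G_\alpha\}=\Gamma\bl\varrho'(p)+\mu'(q)\br=\tfrac{1}{L_0}\bl\mathbb{R}(p)+M(q)\br$, which is (\ref{eq:sym_gamma}) and is consistent with (\ref{eq:_secII_gamma}). The one place to be careful --- and the heart of the argument --- is the bookkeeping establishing that all second-order (double-bracket) terms cancel; once that is in hand, the resistive and memristive voltages emerge entirely from the conjugate-noise coupling $\Gamma\{F_\alpha,G_\alpha\}$ via $u$, which is exactly what makes this construction more transparent than the Wiener-noise construction of Theorem \ref{Thm:main}.
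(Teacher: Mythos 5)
Your proposal is correct and follows essentially the same route as the paper's proof: apply the two-pair version of Lemma \ref{lem:Gough15}, observe that all double Poisson brackets vanish because each $F_\alpha$ depends on $p$ only and each $G_\alpha$ on $q$ only, take $u$ from Corollary \ref{cor:only} to get $u^q=0$ and $u^p=-\Psi'_{\mathscr{R}}(p)-\tfrac{p}{L_0}M(q)$, and read off $\gamma=\Gamma\sum_\alpha\{F_\alpha,G_\alpha\}$. Your explicit check of the noise coefficients actually goes slightly beyond the paper (which merely asserts that the SDEs ``follow immediately''), and your brackets $\{p,G_1\}=1$ and $\{p,G_2\}=\mu'(q)=M(q)/(\Gamma L_0)$ are the correct ones --- they expose what appear to be sign and prime typos in the printed form of (\ref{eq:sym_b}).
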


\begin{proof}
 For the
choices $\left( F_{\alpha },G_{\alpha }\right) $ in the statement of the
Theorem, we have that the vector field obtained in Corollary \ref{cor:only}
will be 
\begin{eqnarray*}
u^{q} &=&0, \\
u^{p} &=&-\Gamma \left( \varrho \left( p\right) +p\mu ^{\prime }\left(
q\right) \right)  \\
&\equiv &-\Psi _{\mathscr{R}}^{\prime }(p)-\frac{p}{L_{0}}M\left( q\right) .
\end{eqnarray*}%
The It\={o} drift may be calculated from the generator in Lemma \ref{lem:Gough15} , for instance by $%
v^{q}=\mathcal{L}q$ we have
\begin{eqnarray*}
v^{q} &=&\frac{\partial H}{\partial p}+u^{q} \\
&+&\frac{1}{2}\sum_{\alpha }\left\{ \frac{\partial F_{\alpha }}{\partial p}%
,F_{\alpha }\right\} +\frac{1}{2}\sum_{\alpha }\left\{ \frac{\partial
G_{\alpha }}{\partial p},G_{\alpha }\right\} 
\end{eqnarray*}%
however $\sum_{\alpha }\left\{ \frac{\partial F_{\alpha }}{\partial p}%
,F_{\alpha }\right\} =\left\{ \varrho ^{\prime }\left( p\right) ,\varrho
\left( p\right) \right\} +\left\{ p,1\right\} =0$ and $\sum_{\alpha }\left\{ 
\frac{\partial G_{\alpha }}{\partial p},G_{\alpha }\right\} \equiv 0$, since
the $G_{\alpha }$ do not depend on $q$. So we have
\begin{eqnarray}
v^{q} =\frac{\partial H}{\partial p}+u^{q}=\frac{1}{L_{0}}p. \label{eq:sym_vq}
\end{eqnarray}%
 Similarly 
\begin{eqnarray*}
v^{p} &=&-\frac{\partial H}{\partial q}+u^{p} \\
&-&\frac{1}{2}\sum_{\alpha }\left\{ \frac{\partial F_{\alpha }}{\partial q}%
,F_{\alpha }\right\} -\frac{1}{2}\sum_{\alpha }\left\{ \frac{\partial
G_{\alpha }}{\partial q},G_{\alpha }\right\} .
\end{eqnarray*}%
but again $\sum_{\alpha }\left\{ \frac{\partial F_{\alpha }}{\partial q}%
,F_{\alpha }\right\} \equiv 0$, as the $F_{\alpha }$ do not depend on $p$,
and $\sum_{\alpha }\left\{ \frac{\partial G_{\alpha }}{\partial q},G_{\alpha
}\right\} =\left\{ 1,q\right\} +\left\{ \mu ^{\prime }\left( q\right) ,\mu
\left( q\right) \right\} =0$. Therefore, we have 
\begin{eqnarray}
v^{p} &=&-\frac{\partial H}{\partial q}+u^{p} \nonumber \\
&=&-\Phi _{\mathscr{C}}^{\prime }\left( q\right) +e\left( t\right) -\Psi _{%
\mathscr{R}}^{\prime }(p)-\frac{1}{L_{0}}M\left( q\right) p. \label{eq:sym_vp}
\end{eqnarray}%
Equations (\ref{eq:sym_vq})-(\ref{eq:sym_vp}) are the desired form of the It\={o} drift (\ref{eq:vq})-(\ref{eq:vp}).
The stochastic evolution (\ref{eq:sym_a})-(\ref{eq:sym_b}) follows
immediately. Finally, We note that the dissipation will be 
\begin{eqnarray}
\gamma \left( q,p\right)  &=&\Gamma \sum_{\alpha }\left\{ F_{\alpha
},G_{\alpha }\right\}   \notag \\
&=&\Gamma \left( \varrho ^{\prime }(p)+\mu ^{\prime }\left(
q\right) \right)   \notag \\
&=&\frac{1}{L_{0}}\left( \mathbb{R}\left( p\right) +M\left( q\right) \right) 
\label{eq:gamma_sym}
\end{eqnarray}%
which is the correct form, (recall that $\Psi _{\mathscr{R}%
}^{\prime \prime }(p)\equiv \mathbb{R}\left( p\right) /L_{0}$, cf. Eq. (\ref{eq:_secII_gamma})). Eq. (\ref{eq:sym_gamma}) is established.
\end{proof}


\section{Quantum Stochastic Models} \label{sec:Quantum} 

To quantize, we replace $q$ and $p$ by operators
satisfying the commutation relations 
\begin{equation*}
\left[ \hat{q},\hat{p}\right] =i\hbar .
\end{equation*}%
In the case of an $LC$ circuit driven by a classical emf $e$ we have the
Hamiltonian 
\begin{equation*}
\hat{H}_{0}=\frac{1}{2L_{0}}\hat{p}^{2}+\frac{1}{2C_{0}}\hat{q}^{2}-e(t)\hat{%
q}.
\end{equation*}%
We may introduce annihilation operators, defined as 
\begin{equation*}
\hat{a}=\left( 2\hbar \omega L_{0}\right) ^{-1/2}\left( \omega _{0}L_{0}\hat{%
q}+ip\right)
\end{equation*}%
and so 
\begin{eqnarray*}
\hat{q} &=&\sqrt{\frac{\hbar \omega _{0}C_{0}}{2}}\left( \hat{a}+\hat{a}%
^{\ast }\right) , \\
\hat{q} &=&i\sqrt{\frac{\hbar \omega _{0}L_{0}}{2}}\left( \hat{a}^{\ast }-%
\hat{a}\right) .
\end{eqnarray*}%
The Hamiltonian is then 
\begin{equation}
\hat{H}_{0}=\hbar \omega _{0}\left( \hat{a}^{\ast }\hat{a}+\frac{1}{2}%
\right) -\sqrt{\frac{\hbar \omega _{0}C}{2}}\left( \hat{a}+\hat{a}^{\ast
}\right) e\left( t\right) .  \label{eq:Ham_osc}
\end{equation}

While the formalism is identical to the quantum mechanical oscillator, we
may give an entirely different interpretation physically. The state $%
|n\rangle $ for the quantized circuit describes the situation where there
are $n$ quanta (photons) in the circuit. In this case the number operator $%
\hat{N}=\hat{a}^{\ast }\hat{a}$ is the observable corresponding to the
number of photons in the circuit. Note that we may have zero photons, and
this corresponds to a physical state $|0\rangle $. The source term in the
Hamiltonian is proportional to $e\left( t\right) $ and involves the creation 
$\hat{a}^{\ast }$ and annihilation $\hat{a}$ of photons.

We are in the situation that there exists a theory of quantum stochastic
integration generalizing the It\={o} calculus. The mathematical theory was
developed in 1984 by Hudson and Parthasarathy, \cite{HP84}, though was also
derived on a physical basis for modelling physical noise in quantum photonic
models by Gardiner and Collett in 1985, \cite{GC85}. In essence, we have a quantum system
with underlying Hilbert space $\mathfrak{h}$ which is in interaction with an
infinite environment modelled as a quantum field, \cite{HP84}, \cite{Pa92}, \cite{GarZol00}. Without
going too far into the details we have quantum white noise fields $\hat{a}%
_{\alpha }\left( t\right) $ which are operators satisfying commutation
relations 
\begin{equation*}
\left[ \hat{a}_{\alpha }\left( t\right) ,\hat{a}_{\beta }^{\ast }\left(
s\right) \right] =\delta _{\alpha \beta }\,\delta \left( t-s\right) .
\end{equation*}

We also fix the vacuum state which is the unique vector such that $\hat{a}%
_{\alpha }\left( t\right) |\Omega \rangle =0$.

The integrated fields $\hat{A}_{\alpha }\left( t\right) =\int_{0}^{t}\hat{a}%
_{\alpha }\left( s\right) ds$ and $\hat{A}_{\alpha }^{\ast }\left( t\right)
=\int_{0}^{t}\hat{a}_{\alpha }^{\ast }\left( s\right) ds$ are well defined
operators on Bose Fock space $\mathfrak{F}$ satisfying 
\begin{equation*}
\left[ \hat{A}_{\alpha }\left( t\right) ,\hat{A}_{\beta }^{\ast }\left(
s\right) \right] =\delta _{\alpha \beta }\min \left\{ t,s\right\} \text{.}
\end{equation*}%
The quadrature processes are defined by 
\begin{equation*}
\hat{Q}_{\alpha }\left( t\right) =\hat{A}_{\alpha }\left( t\right) +\hat{A}%
_{\alpha }^{\ast }\left( t\right) ,\hat{P}_{\alpha }\left( t\right) =\frac{1%
}{i}\left( \hat{A}_{\alpha }\left( t\right) -\hat{A}_{\alpha }^{\ast }\left(
t\right) \right) ,
\end{equation*}
both of which are self-commuting operator-valued processes, that is, 
\begin{equation*}
\left[ \hat{Q}_{\alpha }\left( t\right) ,\hat{Q}_{\beta }\left( s\right) %
\right] =0=\left[ \hat{P}_{\alpha }\left( t\right) ,\hat{P}_{\beta }\left(
s\right) \right] .
\end{equation*}%
In the vacuum state of the noise, they both have the statistics of a
standard Wiener process. However, the quadrature processes do not commute
and we have instead 
\begin{equation}
\left[ \hat{Q}_{\alpha }\left( t\right) ,\hat{P}_{\beta }\left( s\right) %
\right] =2i\delta _{\alpha \beta }\min \left\{ t,s\right\} .  \label{eq:CCC}
\end{equation}%
Hudson and Parthasarathy developed a theory of quantum stochastic
integration wrt. these processes. This involves the following nontrivial
product of It\={o} increments 
\begin{equation*}
d\hat{A}_{\alpha }\left( t\right) d\hat{A}_{\beta }^{\ast }\left( t\right)
=\delta _{\alpha \beta }dt.
\end{equation*}%
They show that a quantum stochastic process $\hat{U}\left( t\right) $ can be
defined on the joint system+noise space $\mathfrak{h}\otimes \mathfrak{F}$
by 
\begin{eqnarray*}
d\hat{U}\left( t\right) &=&\big(\sum_{\alpha }\hat{L}_{\alpha }d\hat{A}%
_{\alpha }^{\ast }\left( t\right) -\sum_{\alpha }\hat{L}_{\alpha }^{\ast }d%
\hat{A}_{\alpha }\left( t\right) \\
&&-\frac{1}{2}\sum_{\alpha }^{\ast }\hat{L}_{\alpha }^{\ast }L_{\alpha }dt-%
\frac{i}{\hbar }\hat{H}_{0}dt\big)\hat{U}\left( t\right) , \\
\hat{U}(0) &=&I_{\mathfrak{h}}\otimes I_{\mathfrak{F}},
\end{eqnarray*}%
and that the process is unitary. (Technically they require the system
operators $\hat{H}_{0}=\hat{H}_{0}^{\ast }$ and $\hat{L}_{\alpha }$ to be
bounded, however, the theory extends to unbounded coefficients). The
dynamical evolution of a system observable $\hat{X}$ is given by 
\begin{equation*}
j_{t}\left( \hat{X}\right) =\hat{U}\left( t\right) ^{\ast }\left( \hat{X}%
\otimes I_{\mathfrak{F}}\right) \hat{U}\left( t\right),
\end{equation*}%
and one can deduce the following dynamical equations of motion: 
\begin{eqnarray}
dj_{t}\left( \hat{X}\right) &=&\sum_{\alpha }j_{t}\left( \left[ \hat{X},\hat{%
L}_{\alpha }\right] \right) d\hat{A}_{\alpha }^{\ast }\left( t\right)  \notag
\\
&&+\sum_{\alpha }j_{t}\left( \left[ \hat{L}_{\alpha }^{\ast },\hat{X}\right]
\right) d\hat{A}_{\alpha }\left( t\right)  \notag \\
&&+j_{t}\left( \mathcal{L}\hat{X}\right) dt  \label{QSDE:X}
\end{eqnarray}%
where the generator takes the form 
\begin{eqnarray*}
\mathcal{L}\hat{X} &=&\frac{1}{2}\sum_{\alpha }\left[ \hat{L}_{\alpha
}^{\ast },\hat{X}\right] \hat{L}_{\alpha }+\frac{1}{2}\sum_{\alpha }\hat{L}%
_{\alpha }^{\ast }\left[ \hat{X},\hat{L}_{\alpha }\right] \\
&&+\frac{1}{i\hbar }\left[ \hat{X},\hat{H}_{0}\right] .
\end{eqnarray*}%
This is the well-known GKS-Lindblad generator from the theory of quantum
dynamical semi-groups, \cite{Lindblad},  \cite{GKS}, \cite{Go05}, \cite{Go06}. The Hudson-Parthasarathy theory therefore offers a unitary dilation of such quantum Markov semi-groups.

We now show how to realize the analogue of the It\={o} drift corresponding
to the phase velocity field (\ref{eq:drift_velocity}). Note that in this
case the terms $pM(q)$ is ambiguous as $\hat{p}$ and $M(\hat{q})$ generally
do not commute. We naturally interpret this as the symmetrically ordered
term $\frac{1}{2}\hat{p}M(\hat{q})+\frac{1}{2}M(\hat{q})\hat{p}$. Motivated
by the previous Theorem \ref{Thm:main2} giving the construction of such a
diffusion in the classical setting with two independent canonical pairs of
symplectic Wiener noise, we now establish the corresponding quantum analogue.

\begin{theorem}
\label{Thm:main3} Given the function $\Phi _{\mathscr{C}}$, $\Psi _{%
\mathscr{R}}$ and $M$ we obtain the It\={o} drift terms 
\begin{eqnarray*}
v^{q} &=&\mathcal{L}\hat{q}=\frac{\hat{p}}{L_{0}}, \\
v^{p} &=&\mathcal{L}\hat{p}=-\Phi _{\mathscr{C}}^{\prime }\left( \hat{q}%
\right) -\Psi _{\mathscr{R}}^{\prime }(\hat{p}) \\
&&-\frac{1}{2L_{0}}\left( M\left( \hat{q}\right) \hat{p}+\hat{p}M\left( \hat{%
q}\right) \right) +e\left( t\right) ,
\end{eqnarray*}%
for the choice $\hat{H}=\hat{H}_{0}+\hat{K}$ where $\hat{H}_{0}$ is the
Hamiltonian (\ref{eq:Ham_osc}) and 
\begin{eqnarray*}
\hat{K} &=&\frac{1}{2}\left[ f(\hat{p})\hat{q}+\hat{q}f(\hat{p})\right] , \\
&&+\frac{1}{2}\left[ \hat{p}g\left( \hat{q}\right) +g\left( \hat{q}\right) 
\hat{p}\right]
\end{eqnarray*}%
and coupling terms 
\begin{eqnarray}
\hat{L}_{1} &=&\hat{q}+i\frac{1}{\hbar }f\left( \hat{p}\right) ,
\label{eq:L1} \\
\hat{L}_{2} &=&\frac{1}{\hbar }g\left( \hat{q}\right) +i\hat{p},
\label{eq:L2}
\end{eqnarray}%
with the functions $f$ and $g$ given by 
\begin{eqnarray*}
f\left( \hat{p}\right) &=&\frac{1}{2}\Psi _{\mathscr{R}}^{\prime }(\hat{p}),
\\
g^{\prime }\left( \hat{q}\right) &=&\frac{1}{2L_{0}}M\left( \hat{q}\right) .
\end{eqnarray*}%
The quantum stochastic model of the system (\ref{eq:drift_velocity}) is 
\begin{eqnarray}
dj_{t}(\hat{q}) = j_{t}(v^{q})dt-j_{t}(f^{\prime }(\hat{p}))d\hat{Q}%
_{1}(t)-\hbar d\hat{Q}_{2}(t), \\
dj_{t}(\hat{p}) = j_{t}(v^{p})dt-\hbar d\hat{P}_{1}(t)-j_{t}(g^{\prime }(%
\hat{q}))d\hat{P}_{2}(t).
\end{eqnarray}
\end{theorem}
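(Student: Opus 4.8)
The plan is to compute the GKS-Lindblad generator $\mathcal{L}\hat{X}$ directly for $\hat{X}=\hat{q}$ and $\hat{X}=\hat{p}$ using the given coupling operators $\hat{L}_1,\hat{L}_2$ and Hamiltonian $\hat{H}=\hat{H}_0+\hat{K}$, and then verify that the result matches the stated symmetrically-ordered drift terms. The essential observation is that the construction mirrors Theorem \ref{Thm:main2}: the real part of $\hat{L}_\alpha$ plays the role of $F_\alpha$ and the imaginary part the role of $G_\alpha$, so that the double-commutator dissipative terms should contribute nothing to the drift (just as the double Poisson brackets vanished classically) while the cross terms in $\hat K$ generate exactly the memristive and resistive contributions. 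The main work is to evaluate the three pieces of $\mathcal{L}\hat{X}=\frac{1}{2}\sum_\alpha[\hat{L}_\alpha^\ast,\hat{X}]\hat{L}_\alpha+\frac{1}{2}\sum_\alpha\hat{L}_\alpha^\ast[\hat{X},\hat{L}_\alpha]+\frac{1}{i\hbar}[\hat{X},\hat{H}_0+\hat{K}]$.

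First I would record the basic commutators. Since $[\hat q,\hat p]=i\hbar$, for a differentiable function one has $[\hat q,h(\hat p)]=i\hbar\,h'(\hat p)$ and $[\hat p,h(\hat q)]=-i\hbar\,h'(\hat q)$. Using $\hat{L}_1=\hat{q}+\frac{i}{\hbar}f(\hat{p})$ and $\hat{L}_2=\frac{1}{\hbar}g(\hat{q})+i\hat{p}$, I would compute $[\hat{q},\hat{L}_1]=[\hat q,\frac{i}{\hbar}f(\hat p)]=i\cdot i f'(\hat p)=-f'(\hat p)$ and $[\hat{q},\hat{L}_2]=[\hat q,i\hat p]=-\hbar$, together with the adjoint commutators $[\hat{L}_1^\ast,\hat{q}]$ and $[\hat{L}_2^\ast,\hat{q}]$. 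Substituting these into the first two (dissipative) terms of $\mathcal{L}\hat q$, the symmetric combination $\frac{1}{2}([\hat{L}_\alpha^\ast,\hat{q}]\hat{L}_\alpha+\hat{L}_\alpha^\ast[\hat{q},\hat{L}_\alpha])$ should cancel or collapse to a manifestly Hermitian object; the quantum analogue of the classical identity $\sum_\alpha\{\partial_p F_\alpha,F_\alpha\}=0$ is what drives this cancellation. What survives in $\mathcal{L}\hat q$ is the Hamiltonian part $\frac{1}{i\hbar}[\hat q,\hat H_0]=\hat p/L_0$ plus the contribution of $\hat K$; since $\hat K$ is linear in $\hat q$ through the symmetrized term $\frac12(\hat p g(\hat q)+g(\hat q)\hat p)$ and $f(\hat p)$-dependent through $\frac12(f(\hat p)\hat q+\hat q f(\hat p))$, the commutator $\frac{1}{i\hbar}[\hat q,\hat K]$ contributes terms that, combined with the dissipative leftovers, yield precisely $v^q=\hat p/L_0$.

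The analogous computation for $\mathcal{L}\hat p$ is where the symmetric ordering becomes essential and I expect it to be the main obstacle. The Hamiltonian part gives $\frac{1}{i\hbar}[\hat p,\hat H_0]=-\Phi'_{\mathscr{C}}(\hat q)+e(t)$, while the dissipative double-commutator terms for $\hat L_1$ should produce $-\Psi'_{\mathscr{R}}(\hat p)$ and for $\hat L_2$ should produce the memristive term. The delicate point is that $[\hat p,\hat K]$ and the dissipative pieces each generate non-symmetrized products of $M(\hat q)$ and $\hat p$, and one must track operator ordering carefully so that they assemble into the prescribed Weyl-ordered combination $-\frac{1}{2L_0}(M(\hat q)\hat p+\hat p M(\hat q))$ rather than a one-sided product with a spurious $\hbar$-dependent correction. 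With $g'(\hat q)=\frac{1}{2L_0}M(\hat q)$ and $f(\hat p)=\frac12\Psi'_{\mathscr{R}}(\hat p)$, the factors of two from the double commutators and from the symmetrized $\hat K$ are arranged to combine exactly; I would verify term by term that all residual $\hbar$-corrections cancel, leaving the stated $v^p$. Finally, reading off the martingale (noise) terms from \eqref{QSDE:X}, the increments $d\hat A_\alpha,d\hat A_\alpha^\ast$ combine through $j_t([\hat X,\hat L_\alpha])$ into the quadratures $d\hat Q_\alpha,d\hat P_\alpha$ using $\hat Q_\alpha=\hat A_\alpha+\hat A_\alpha^\ast$ and $\hat P_\alpha=\frac{1}{i}(\hat A_\alpha-\hat A_\alpha^\ast)$, producing the coefficients $-f'(\hat p),-\hbar,-\hbar,-g'(\hat q)$ displayed in the two stochastic differential equations, which completes the proof.
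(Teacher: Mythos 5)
Your proposal follows essentially the same route as the paper's proof: a direct evaluation of the GKS--Lindblad generator on $\hat q$ and $\hat p$ using the stated $\hat H_0+\hat K$ and $\hat L_1,\hat L_2$, followed by reading the quadrature noise coefficients off the commutators $[\hat X,\hat L_\alpha]$ and $[\hat L_\alpha^\ast,\hat X]$. The only imprecision is bookkeeping, not substance: the dissipative terms for $\hat q$ do not vanish on their own but cancel against $\frac{1}{i\hbar}[\hat q,\hat K]$, and the $-\Psi_{\mathscr{R}}'(\hat p)$ and memristive terms in $v^p$ each arise half from the Lindblad dissipator and half from $\hat K$, exactly as your ``factors of two'' remark anticipates.
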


\begin{proof}
We have for instance 
\begin{equation*}
\frac{1}{i\hbar }\left[ \hat{q},\hat{H}\right] =\frac{\hat{p}}{L_{0}}+\frac{1%
}{2}\left[ f^{\prime }\left( \hat{p}\right) \hat{q}+\hat{q}f^{\prime }\left( 
\hat{p}\right) \right] +g\left( \hat{q}\right) ,
\end{equation*}%
and 
\begin{eqnarray*}
\frac{1}{2}\left[ \hat{L}_{1}^{\ast },\hat{q}\right] \hat{L}_{1}+\frac{1}{2}%
\hat{L}_{1}^{\ast }\left[ \hat{q},\hat{L}_{1}\right] &=&-\frac{1}{2}\left[
f^{\prime }\left( \hat{p}\right) \hat{q}+\hat{q}f^{\prime }\left( \hat{p}%
\right) \right] \\
\frac{1}{2}\left[ \hat{L}_{2}^{\ast },\hat{q}\right] \hat{L}_{2}+\frac{1}{2}%
\hat{L}_{2}^{\ast }\left[ \hat{q},\hat{L}_{2}\right] &=&-g\left( \hat{q}%
\right)
\end{eqnarray*}%
which combine to give $\mathcal{L}\hat{q}=\frac{\hat{p}}{L_{0}}$. Similarly,
we have 
\begin{eqnarray*}
\frac{1}{i\hbar }\left[ \hat{p},\hat{H}\right] &=&-\Phi _{\mathscr{C}%
}^{\prime }\left( \hat{q}\right) +e(t) \\
&&-f\left( \hat{p}\right) -\frac{1}{2}\left[ g^{\prime }\left( \hat{q}%
\right) \hat{p}+\hat{p}f^{\prime }\left( \hat{q}\right) \right] ,
\end{eqnarray*}%
and 
\begin{eqnarray*}
\frac{1}{2}\left[ \hat{L}_{1}^{\ast },\hat{p}\right] \hat{L}_{1}+\frac{1}{2}%
\hat{L}_{1}^{\ast }\left[ \hat{p},\hat{L}_{1}\right] &=&-f(\hat{p}) \\
\frac{1}{2}\left[ \hat{L}_{2}^{\ast },\hat{q}\right] \hat{L}_{2}+\frac{1}{2}%
\hat{L}_{2}^{\ast }\left[ \hat{q},\hat{L}_{2}\right] &=&-\frac{1}{2}%
(g^{\prime }(\hat{q})\hat{p}+\hat{p}g^{\prime }(\hat{q}))
\end{eqnarray*}%
which combine to give 
\begin{eqnarray*}
\mathcal{L}\hat{p} &=&-\Phi _{\mathscr{C}}^{\prime }\left( \hat{q}\right)
+e(t) \\
&&-2f\left( \hat{p}\right) -\left[ g^{\prime }\left( \hat{q}\right) \hat{p}+%
\hat{p}f^{\prime }\left( \hat{q}\right) \right]
\end{eqnarray*}%
so that the choices $f\left( \hat{p}\right) =\frac{1}{2}\Psi _{\mathscr{R}%
}^{\prime }(\hat{p})$ and $g^{\prime }\left( \hat{q}\right) =\frac{1}{2L_{0}}%
M\left( \hat{q}\right) $ give the desired form.
\end{proof}

\begin{remark}
If we choose $\Gamma =2$ in Eq. (\ref{eq:gamma}), (notice that there exists
factor 2 for the canonical commutator in Eq. (\ref{eq:CCC})) then Eqs. (\ref%
{eq:F1G1})-(\ref{eq:F2G2}) become%
\begin{eqnarray*}
F_{1} &=&\frac{1}{2}\Psi _{R}^{^{\prime }}(p),\ \ \ F_{2}=p, \\
G_{1} &=&q,\ \ \ G_{2}=\frac{1}{2L_{0}}M^{^{\prime }}(q).
\end{eqnarray*}%
As a result,%
\begin{eqnarray}
G_{1}+\frac{i}{\hbar }F_{1} &=&q+\frac{i}{\hbar }\frac{1}{2}\Psi
_{R}^{^{\prime }}(p),  \label{eq:cq_1a} \\
\frac{1}{\hbar }G_{2}+iF_{2} &=&\frac{1}{\hbar }\frac{1}{2L_{0}}M^{^{\prime
}}(q)+ip.  \label{eq:cq_1b}
\end{eqnarray}%
Replacing $q$ and $p$ by $\hat{q}$ and $\hat{p}$ respectively in Eqs. (\ref%
{eq:cq_1a})-(\ref{eq:cq_1b}) we get exactly operators $\hat{L}_{1}$ and $%
\hat{L}_{2}$ in Eqs. (\ref{eq:L1})-(\ref{eq:L2}).
\end{remark}

\section{Approximations} \label{sec:appro}

In Sections \ref{sec:Stochastic_Models} and \ref{sec:Quantum} we have proposed stochastic models for dissipative electronic circuits.  A natural question to ask is whether the noisy dynamics can be approximated by systems which are lossless, or more specifically, Hamiltonian. In this section we give an affirmative answer to this question. 

We remark that we may find approximation schemes to Wiener noise. For
instance, it is possible to construct processes $B^{(n)} (t)$ that are
continuously differentiable in the time $t$ variable and which converge
almost surely to a Wiener process $B(t)$ uniformly in $t$ in compacts. The
random Hamiltonian 
\begin{equation*}
\Upsilon^{(n)} (q,p) = H(q,p) + F(q,p) \, B^{(n)} (t),
\end{equation*}
generates the following equations of motion for $x^{(n)} (t) = ( q^{(n)}(t)
, p^{(n)} (t) )$ 
\begin{eqnarray*}
\frac{d}{dt} q^{(n)}(t) &=& \frac{\partial H}{\partial p} |_{x^{(n)} (t) } + 
\frac{\partial F}{\partial p} |_{x^{(n)} (t) } \, B^{(n)} (t), \\
\frac{d}{dt} p^{(n)}(t) &=& -\frac{\partial H}{\partial q} |_{x^{(n)} (t) }
- \frac{\partial F}{\partial q} |_{x^{(n)} (t) } \, B^{(n)} (t),
\end{eqnarray*}
By the Wong-Zakai theorem \cite{Wong_Zakai} , the process $x^{(n)} (t)$ then
converges uniformly in $t$ on compact almost surely to the solution of the
Stratonovich SDEs 
\begin{eqnarray*}
d q (t) &=& \frac{\partial H}{\partial p} |_{x (t) } dt+ \frac{\partial F}{%
\partial p} |_{x (t) } \circ dB(t), \\
dp (t) &=& -\frac{\partial H}{\partial q} |_{x (t) } dt - \frac{\partial F}{%
\partial q} |_{x (t) } \circ dB (t),
\end{eqnarray*}
This is the single noise case of Theorem \ref{thm:JMP1999}. The multiple
noise case is the obvious generalization.

In order to describe the symplectic noise, we set about developing an
approximation using lossless circuits. In Figure \ref{fig:lossless_TL_gray}
below we approximate a transmission line as an assembly of lossless $LC$
circuits.

\begin{figure}[hbp]
\centering
\includegraphics[width=0.40\textwidth]{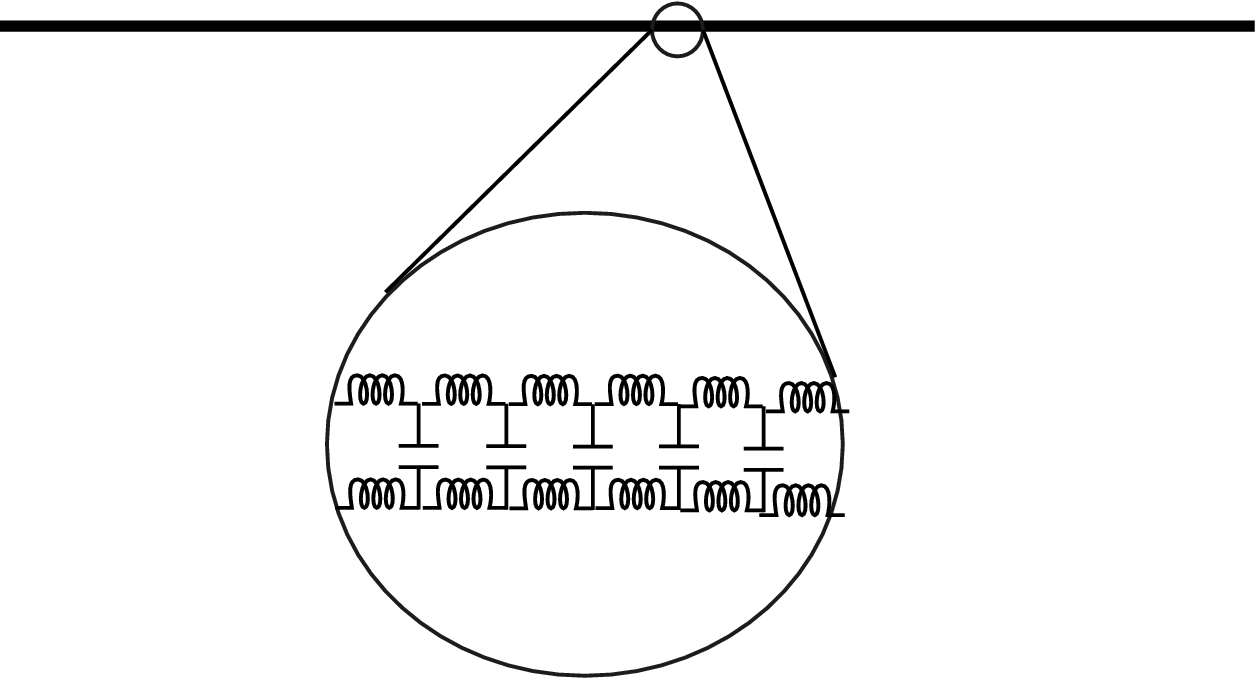}
\caption{A continuous transmission line approximated by shunted LC circuits.}
\label{fig:lossless_TL_gray.png}
\end{figure}

Let $q_k ,p_k$ be the canonical variables satisfying the relations $\{ q_j
,q_k \} =0 =\{p_j , p_k \}$ and $\{q_j ,p_k \} = \delta_{jk}$. For $t>0$ and
integer $N>0$, let us set 
\begin{eqnarray*}
Q^{(N)}(t) = \frac{1}{\sqrt{N}} \sum_k^{ \left\lfloor N t\right\rfloor} q_k,
\quad P^{(N)}(t) = \frac{1}{\sqrt{N}} \sum_k^{ \left\lfloor N t
\right\rfloor}p_k ,
\end{eqnarray*}
where $\left\lfloor x \right\rfloor$ means rounding down to the nearest
integer value. We evidently have $\{ Q^{(N)}(t) , P^{(N)}(t) \} = \frac{1}{N}
\left\lfloor N t \right\rfloor$ which evidently converges to $t$ as $N \to
\infty$. Let us take the energy of the $k$th circuit to be $\frac{1}{2L_0}
p^2_k + \frac{1}{2C_0} q^2_k$ and consider the canonical ensemble
corresponding to circuits in thermal equilibrium at temperature $T$. (In
practice, for $t$ and $N$ finite we only need a finite number in the
assembly.) The pairs of variables $(q_k ,p_k)$ are then independent and
identically distributed with mean zero and variance $Var(q_k ) = C_0 k_B T,
\, Var(p_k ) = L_0 k_B T$ and covariance $Cov(q_k ,p_k) = 0$, where $k_B$ is the Boltzmann constant. By the central
limit effect we see that the pair $(Q^{(N)} , P^{(N)} )$ converge to
independent Wiener processes with temperature dependent variances (which we
can always absorb). The result is a limit symplectic noise obtained as a
limit of thermalized lossless oscillator circuits.

There is a related result for quantum stochastic evolutions. We start with
the Schr\"{o}dinger equation 
\begin{equation*}
i\hbar \frac{d}{dt}\hat{U}^{\left( N\right) }=\hat{\Upsilon}^{\left(
N\right) }\left( t\right) \,\hat{U}^{\left( N\right) }\left( t\right)
\end{equation*}%
with the time dependent Hamiltonian 
\begin{equation*}
\hat{\Upsilon}^{\left( N\right) }\left( t\right) =\hat{E}\otimes \hat{a}%
^{\left( N\right) }\left( t\right) ^{\ast }+\hat{E}^{\ast }\otimes \hat{a}%
^{\left( N\right) }\left( t\right) +\hat{H},
\end{equation*}%
where we have regular reservoir field operators satisfying commutation
relations 
\begin{equation*}
\left[ \hat{a}^{\left( N\right) }\left( t\right) ,\hat{a}^{\left( N\right)
}\left( s\right) ^{\ast }\right] =g^{\left( N\right) }\left( t-s\right) .
\end{equation*}%
We fix the vacuum state $|\Omega \rangle $ for the reservoir so that $\hat{a}%
^{\left( N\right) }\left( t\right) |\Omega \rangle =0$. In the limit $%
N\rightarrow \infty $ we assume that $g^{\left( N\right) }\left( \tau
\right) \rightarrow \delta \left( \tau \right) $ in distribution with $%
\int_{0}^{\infty }g^{\left( N\right) }\left( \tau \right) d\tau =\frac{1}{2}$%
. Then we find the limit 
\begin{eqnarray*}
&&\lim_{N\rightarrow \infty }\langle \phi _{1}\otimes e^{\int f_{1}\left(
u\right) \hat{a}^{\left( N\right) }\left( u\right) ^{\ast }du}\Omega \\
&&|\hat{U}^{\left( N\right) }\left( t\right) \,\phi _{2}\otimes e^{\int
f_{2}\left( v\right) \hat{a}^{\left( N\right) }\left( v\right) ^{\ast
}dv}\Omega \rangle \\
&=&\langle \phi _{1}\otimes e^{\int f_{1}\left( u\right) dA^{\ast }\left(
u\right) }\Omega |\hat{U}\left( t\right) \,\phi _{2}\otimes e^{\int
f_{2}\left( v\right) aA^{\ast }(v)}\Omega \rangle
\end{eqnarray*}%
for arbitrary $\phi _{1},\phi _{2}$ in the system Hilbert space $\mathfrak{h}
$ and $L^{2}$ functions $f_{1},f_{2}$ where $\hat{U}$ is the solution to the
quantum stochastic differential equation (QSDE) 
\begin{eqnarray*}
d\hat{U}\left( t\right) &=&\bigg\{\hat{L}d\hat{A}^{\ast }\left( t\right) -%
\hat{L}^{\ast }d\hat{A}\left( t\right) \\
&&-\left( \frac{1}{2}\hat{L}^{\ast }\hat{L}+\frac{i}{\hbar }\hat{H}\right) dt%
\bigg\}\hat{U}\left( t\right) ,
\end{eqnarray*}%
with $\hat{L}=\hat{E}/i\hbar $. (In both cases we start with the initial
condition that the unitary is the identity.) This limit is clearly the
single noise channel quantum stochastic evolution considered in Section \ref%
{sec:Quantum}. A similar result holds for the Heisenberg equations, as well
as generalizations to thermal states of the reservoir.

\section{Conclusion} \label{sec:conclusion}

In this paper we have proposed stochastic models for electric circuits that may contain memristors, both the classical and quantum versions of noisy dynamics have been obtained. 
Preservation of the canonical structure was used as a guiding principle and the resulting theory allows for approximations schemes using Hamiltonian systems.
Future research includes the application of the proposed stochastic models to more general memristive electric circuits and making deeper connections with the
underlying statistical mechanical derivations.

\section*{Acknowledgment}
This work was supported by the Royal Academy of Engineering's UK-China Research Exchange Scheme, and JG gratefully acknowledges the
kind hospitality of Hong Kong Polytechnic during visit in the Fall of 2014.

\begin{IEEEbiography}{John E. Gough}
was born in Drogheda, Ireland, in 1967. He received the B.Sc. and M.Sc. in Mathematical Sciences
and the Ph.D. degree in Mathematical Physics from the National University of Ireland, Dublin, in 1987, 1988 and 1992
respectively. He was reader in Mathematical Physics at the Department of Mathematics and Computing, 
Nottingham-Trent University, up until 2007 when he joined the Institute of Mathematics and Physics at Aberystwyth 
University where he is now chair of Mathematical Physics. He has held visiting positions at the University of Rome Tor Vergata,
the Australian National University, EPF Lausanne, UC Santa Barbara and the Hong Kong Polytechnic University. His research interests include quantum probability, 
measurement and control of open quantum dynamical systems, and quantum feedback networks.
\end{IEEEbiography}

\begin{IEEEbiography}{Guofeng Zhang}
received his B.Sc. and M.Sc. degree from Northeastern University, Shenyang, China, in
1998 and 2000 respectively. He received a Ph.D. degree in
Applied Mathematics from the University of Alberta, Edmonton, Canada, in 2005. During 2005–2006, he was a
Postdoc Fellow in the Department of Electrical and Computer Engineering at the University of Windsor, Windsor,
Canada. He joined the School of Electronic Engineering
of the University of Electronic Science and Technology of
China, Chengdu, Sichuan, China, in 2007. From April 2010
to December 2011 he was a Research Fellow in the School
of Engineering of the Australian National University. He is currently an Assistant
Professor in the Department of Applied Mathematics at the Hong Kong polytechnic
University. His research interests include quantum control, sampled-data control
and nonlinear dynamics.
\end{IEEEbiography}

\end{document}